\documentclass[11pt]{article}
\usepackage[letterpaper, portrait, margin=1in]{geometry}
\usepackage[hypertexnames=false,colorlinks=true,linkcolor=blue,citecolor=ForestGreen]{hyperref}
\usepackage{algorithm}
\usepackage[noend]{algpseudocode}
\usepackage{url}
\usepackage{amsmath,amssymb,amsthm}
\usepackage{thmtools,thm-restate}
\usepackage[noabbrev,capitalise,nameinlink]{cleveref}
\usepackage{mathtools}
\usepackage{xspace}
\usepackage{verbatim}
\usepackage{mathrsfs}
\usepackage[usenames,dvipsnames,svgnames,table]{xcolor}
\usepackage{pgf}
\usepackage[dvipsnames]{xcolor}
\usepackage{todo}
\usepackage{tabularx}
\usepackage{enumitem}
\usepackage{derivative}
\usepackage{bm}
\usepackage{multirow}
\usepackage{diagbox}
\usepackage{nicematrix}
\usepackage{parskip}
\usepackage{adjustbox}
\usepackage{tikz}
\usepackage{tikz-3dplot}
\usepackage{transparent}
\usepackage{subcaption}
\usepackage[most]{tcolorbox}

\definecolor{theoremcolor}{HTML}{ffefec} 
\definecolor{resultcolor}{HTML}{FFE1D9}
\definecolor{citecolor}{HTML}{35641a}
\definecolor{linkcolor}{HTML}{d30045} 

\hypersetup{
    colorlinks=True,
    citecolor=citecolor,
    linkcolor=linkcolor,
}

\usepackage{dsfont}

\newcommand*\ie{i.\kern.1em e., }
\newcommand*\eg{e.\kern.1em g., }
\newcommand*\cf{c.\kern.1em f.\ }
\newcommand*\almev{a.\kern.1em e.\ }
\newcommand*\iid{i.i.d.\ }

\theoremstyle{plain}
\newtheorem{theorem}{Theorem}[section]
\newtheorem{lemma}[theorem]{Lemma}

\newtheorem{proposition}[theorem]{Proposition}

\newtheorem{corollary}[theorem]{Corollary}
\newtheorem{question}[theorem]{Question}

\newtheorem{observation}[theorem]{Observation}
\newtheorem{assumption}[theorem]{Assumption}
\newtheorem{definition}[theorem]{Definition}

\crefname{claim}{Claim}{Claims}
\crefname{fact}{Fact}{Facts}

\theoremstyle{definition}

\newtheorem{remark}[theorem]{Remark}
\newtheorem{example}[theorem]{Example}

\theoremstyle{plain}

\newenvironment{boxtheorem}{\begin{theorem}}{\end{theorem}}
\tcolorboxenvironment{boxtheorem}{colback=theoremcolor, colframe=white,
    colbacktitle=theoremcolor, coltitle=theoremcolor}

\tcolorboxenvironment{boxlemma}{colback=resultcolor, colframe=white,
    colbacktitle=resultcolor, coltitle=resultcolor}

\tcolorboxenvironment{boxproposition}{colback=resultcolor, colframe=white,
    colbacktitle=resultcolor, coltitle=resultcolor}

\tcolorboxenvironment{boxcorollary}{colback=resultcolor, colframe=white,
    colbacktitle=resultcolor, coltitle=resultcolor}

\tcolorboxenvironment{boxobservation}{colback=resultcolor, colframe=white,
    colbacktitle=resultcolor, coltitle=resultcolor}

\tcolorboxenvironment{boxquestion}{colback=questioncolor, colframe=white,
    colbacktitle=questioncolor, coltitle=shin-kai}
\newenvironment{boxquestion*}{\begin{question*}}{\end{question*}}
\tcolorboxenvironment{boxquestion*}{colback=questioncolor, colframe=white,
    colbacktitle=questioncolor, coltitle=questioncolor}

\tcolorboxenvironment{boxdefinition}{colback=definitioncolor, colframe=white,
    colbacktitle=definitioncolor, coltitle=definitioncolor}

\tcolorboxenvironment{boxassumption}{colback=definitioncolor, colframe=white,
    colbacktitle=definitioncolor, coltitle=definitioncolor}

\tcolorboxenvironment{boxexample}{colback=examplecolor, colframe=white,
    colbacktitle=examplecolor, coltitle=examplecolor}

\newtheorem{exercise}[theorem]{Exercise}

\tcolorboxenvironment{boxexercise}{colback=exercisecolor, colframe=white,
    colbacktitle=exercisecolor, coltitle=exercisecolor}

\tcolorboxenvironment{boxgoal}{colback=goalcolor, colframe=white,
    colbacktitle=goalcolor, coltitle=goalcolor}

\newenvironment{boxgoal*}{\begin{goal*}}{\end{goal*}}
\tcolorboxenvironment{boxgoal*}{colback=goalcolor, colframe=white,
    colbacktitle=goalcolor, coltitle=goalcolor}

\tcolorboxenvironment{boxremark}{colback=remarkcolor, colframe=white,
    colbacktitle=remarkcolor, coltitle=remarkcolor}

\newcommand{\ignore}[1]{}

\DeclareMathOperator{\supp}{supp}   

\DeclareMathOperator{\sign}{sign}

\newcommand{\dist}{\mathsf{dist}}

\newcommand{\Accept}{\mathsf{Accept}}
\newcommand{\Reject}{\mathsf{Reject}}

\newcommand{\Ex}[1]{\bE \left[ #1 \right]}
\newcommand{\Exu}[2]{\underset{#1} \bE \left[ #2 \right] }
\newcommand{\Exuc}[3]{\underset{#1} \bE \left[ #2 \;\; \left| \;\; #3
\right.\right] }

\renewcommand{\Pr}[1]{\bP \left[ #1 \right]} 
\newcommand{\Pru}[2]{\underset{ #1 }\bP \left[ #2 \right]}

\newcommand{\define}{\vcentcolon=}

\renewcommand{\epsilon}{\varepsilon}

\newcommand{\floor}[1]{\ensuremath{\left\lfloor #1 \right\rfloor}}

\newcommand{\inn}[1]{\langle #1 \rangle}

\newcommand{\ind}[1]{\mathds{1} \left[ #1 \right] }

\newcommand{\pmset}{\{\pm 1\}}

\newcommand{\cC}{\ensuremath{\mathcal{C}}}
\newcommand{\cD}{\ensuremath{\mathcal{D}}}

\newcommand{\cH}{\ensuremath{\mathcal{H}}}

\newcommand{\cN}{\ensuremath{\mathcal{N}}}

\newcommand{\cS}{\ensuremath{\mathcal{S}}}

\newcommand{\cX}{\ensuremath{\mathcal{X}}}

\newcommand{\bE}{\ensuremath{\mathbb{E}}}

\newcommand{\bP}{\ensuremath{\mathbb{P}}}
\newcommand{\bR}{\ensuremath{\mathbb{R}}}
\newcommand{\bS}{\ensuremath{\mathbb{S}}}

\newcommand{\ba}{\boldsymbol{a}}
\newcommand{\bb}{{\boldsymbol{b}}}

\newcommand{\bs}{\boldsymbol{s}}

\newcommand{\bw}{\boldsymbol{w}}
\newcommand{\bx}{\boldsymbol{x}}

\newcommand{\bz}{\boldsymbol{z}}

\newcommand{\abs}[1]{{\left| #1 \right|}}
\newcommand{\vabs}[1]{{\left\| #1 \right\|}}
\newcommand{\abra}[1]{{\left\langle #1 \right\rangle}}
\newcommand{\pbra}[1]{{\left( #1 \right)}}

\newcommand{\TV}{\mathsf{TV}}

\newcommand{\yes}{{\mathrm{yes}}}
\newcommand{\no}{{\mathrm{no}}}
\newcommand{\Dyes}{\cD_{\yes}}
\newcommand{\Dno}{\cD_{\no}}
\newcommand{\bdelta}{\boldsymbol{\Delta}}
\newcommand{\bX}{\boldsymbol{X}}
\newcommand{\TupleS}{\bm{\cS}}
\newcommand{\btheta}{\bm{\theta}}
\newcommand{\Bern}{\mathrm{Bern}}

\newcommand{\Lock}{\mathrm{Lock}}
\newcommand{\Unlock}{\mathrm{Unlock}}

\title{Distribution-Free Halfspace Testing with Samples}
\author{Xi Chen \\ Columbia University
\and
Renato Ferreira Pinto Jr. \\ Columbia University
\and
Nathaniel Harms \\ University of British Columbia
\and
Shyamal Patel \\ Columbia University 
\and
Rocco A.~Servedio \\ Columbia University}

\date{}

\begin{document}

\maketitle

\begin{abstract}
We prove a tight $\Theta(n/\epsilon)$ lower bound on the number of samples required for testing halfspaces
over $\bR^n$, in the distribution-free sample-based model where the underlying probability
distribution is unknown to the algorithm, and the algorithm only receives random samples (\ie it
cannot make queries). This shows that testing is no more efficient than learning for halfspaces. We also show a matching upper bound 
for one-sided testers, improving on the standard (two-sided) testing-by-learning reduction, establishing
that two-sided halfspace testers in this model have no advantage over one-sided testers.
\end{abstract}

\thispagestyle{empty}
\setcounter{page}{0}
{
\setcounter{tocdepth}{2}
\tableofcontents
}

\thispagestyle{empty}
\setcounter{page}{0}
\newpage
\setcounter{page}{1}

\newcommand{\red}[1]{{\color{red} {#1}}}
\newcommand{\blue}[1]{{\color{blue} {#1}}}
\newcommand{\gray}[1]{{\color{gray} {#1}}}
\newcommand{\violet}[1]{{\color{violet} {#1}}}

\newcommand{\shyamal}[1]{\blue{\textbf{Shyamal:} {#1}}}

\newcommand{\rocco}[1]{\violet{\textbf{Rocco:} {#1}}}

\newcommand{\rsnote}[1]{\footnote{{\color{blue}\textbf{Rocco:}} {#1}}}

\allowdisplaybreaks

\section{Introduction}

We would like to understand when the simplest and most trivial property testing algorithm is also
\emph{optimal}, thereby justifying our laziness and ineptitude in algorithm design.

Let us explain what we mean by the simplest and most trivial algorithm.
In this paper, we are interested in \emph{distribution-free sample-based} property testing. To
motivate and define these testers, consider the problem of learning halfspaces, which is one of the
most well-studied problems in learning theory. We receive random samples $(\bm x, f(\bm x))$, where
$\bm x \sim \cD$ is drawn from an arbitrary distribution $\cD$
(unknown to us) and labelled by an unknown function $f$.
The goal is to learn $f$.

The first step is to choose an appropriate \emph{hypothesis class} $\cH$ (\ie a set of functions), which we'd like to be ``simple'' while also containing the unknown $f$.
A common choice, both in theory and in practice, is the set of \emph{halfspaces} (or linear
separators), which are functions of the form
\[
    h(x) = \sign\big(\inn{w,x} - t\big) \qquad\text{where}\qquad w \in \bR^n, t \in \bR .
\]
Then, we collect samples and output some function $h \in \cH$ that agrees with $f$ on the samples.
The beautiful and well-developed VC dimension theory \cite{VC71,VC74}, and its continuing
refinement, tells us that, for our class of halfspaces, if we draw
$\Theta(\tfrac{n+\log(1/\delta)}{\epsilon})$ samples, then with probability at least $1-\delta$, we
can choose (using the SVM rule \cite{VC74,BHMZ20}) a halfspace $h$ that will satisfy
\[
    \epsilon > \dist_\cD(f,h) \define \Pru{\bm x \sim \cD}{f(\bm x) \neq h(\bm x)} ,
\]
\emph{if the unknown function $f$ is indeed consistent with a halfspace,} 
\ie 
$f \in \cH$. But how do we know whether this assumption about $f$ is true? This is the purpose of a
\emph{halfspace tester}: it should output $\Accept$ when indeed $f \in \cH$, and it should output
$\Reject$ when $\dist_\cD(f,h) > \epsilon$ for all halfspaces $h$. The tester is called
\emph{distribution-free} and \emph{sample-based} when it is given the same input as the
learning algorithm above: random samples $(\bm x, f(\bm x))$ from an arbitrary unknown $\cD$.

So, what is the \emph{simplest} distribution-free sample-based tester (or indeed
the simplest tester in any model)? It is a simpler version of the learning
algorithm: Collect samples and output $\Accept$ if there \emph{exists} $h \in
\cH$ that agrees with $f$ on all of the samples; otherwise output $\Reject$.
This trivial algorithm has one-sided error (\ie it never outputs $\Reject$ when
$f \in \cH$), and indeed it is the \emph{unique} one-sided-error tester when
only random samples are provided.

When is this trivial tester the optimal one? While the sample complexity of \emph{learning} has a
well-developed theory, we do not have a good understanding of the sample complexity of
\emph{testing}. While the tester is trivial, the analysis of its sample complexity is not, and it is
not even clear \emph{a priori} whether the trivial tester can match the sample complexity of 
learning (the classic testing-by-learning reduction \cite{GGR98} produces a two-sided
error tester). We know that the trivial tester is sometimes more efficient than learning, in
particular for monotone functions and juntas \cite{BFH21,BHK26}. It is the optimal tester for juntas
\cite{BHK26} but \emph{not} the optimal tester for the class of functions of bounded support
\cite{FH26}.

In this paper we show that the trivial tester is the optimal distribution-free sample-based
halfspace tester, and that testing has no advantage over learning:

\begin{boxtheorem}
    \label{thm:main}
Distribution-free sample-based halfspace testing over $\bR^n$ requires
$\Theta\left(\tfrac{n+\log(1/\delta)}{\epsilon}\right)$ samples. The upper bound is achieved by
the one-sided error tester, and the lower bound holds even for two-sided testers on the restricted domain $\pmset^n$.
\end{boxtheorem}

\paragraph*{Comparison to prior work.} When the distribution
is uniform over an arbitrary point set, \cite{ES20} gave an
$O(n/\epsilon)$-sample two-sided tester, and, more generally, a two-sided error $O(n/\epsilon)$
distribution-free bound follows from the
standard testing-by-learning reduction \cite{GGR98} combined with the relatively recent optimal bounds on (proper\footnote{A \emph{proper} learner for a class $\cH$ must output a function in $\cH$; these are required by the standard testing-by-learning reduction \cite{GGR98}.}) learning of
halfspaces \cite{BHMZ20}. 
More sophisticated sample-based halfspace testers
with better sample complexity (but necessarily with two-sided error) are known
under distributional assumptions, in particular when $\cD$ is Gaussian
\cite{BBBY12} or rotation-invariant \cite{Har19}.  When queries are allowed,
there is a constant-query tester for the Gaussian distribution and for the
uniform distribution over $\pmset^n$ \cite{MORS10}. 

To prove our one-sided error upper bound, we prove a new general bound for
one-sided testers using stable sample-compression schemes (see
\cref{section:ub}), which were previously used by \cite{BHMZ20} to prove the
sample optimality of the Support-Vector Machine (SVM) algorithm for learning
halfspaces. For classes which admit small compression schemes, our result improves upon the standard
testing-by-learning reduction of \cite{GGR98} that gives a two-sided tester, and
upon the $O\left(\tfrac{\mathsf{VC}}{\epsilon}\log(1/\epsilon)\right)$ one-sided
upper bound of \cite{BFH21}, where $\mathsf{VC}$ denotes the VC dimension (for
halfspaces this is $n+1$). 

The lower bound in our theorem improves on a lower bound of
$\Omega\left(\frac{n}{\epsilon \log n}\right)$ proved by a more general method
in \cite{BFH21}, which gave similar lower bounds for other hypothesis classes,
and has also been extended to work against 
testers that can make queries \cite{CP22}. 
\cref{thm:main} also recovers the $\Omega(n/\epsilon)$ lower bound stated in
\cite{ES20}, with applications for testing LP-type problems, whose proof
unfortunately had a mistake (as noted by those authors, see
\cite[Remark 6.11]{BFH21}), and whose correction led to the weaker bound of
\cite{BFH21}.

\paragraph*{Qualitative significance of these improvements.}
We are interested in tightening these prior bounds for halfspace testing for several
reasons. The first reason is that the general lower bound method of \cite{BFH21}
relies on lower bounds for the support-size distinction problem
\cite{RRSS09,VV11,WY19}, which use sophisticated moment-matching techniques; our
new proof is self-contained and elementary. Also, the general method of
\cite{BFH21} cannot be tightened, because it is already tight for some
(non-halfspace) classes \cite{FH26}.

The second reason is that halfspaces are crucial for understanding the \emph{testing vs.~learning}
question of \cite{GGR98}, a central question in property testing
which asks when testing can be done more efficiently than learning. One motivation for this question
is the task described earlier: testing whether a chosen hypothesis class is suitable for learning an
unknown function $f$, while ideally using fewer samples than required for learning. Halfspaces are
arguably the most important class of functions to understand from this perspective: they are among
the most well-studied and widely-used functions in learning theory, so we should aim to understand
testers for them equally well.

The third reason is to better understand one-sided vs.~two-sided error.
Recent work \cite{BHK26} argued that, for sample-based testers, one-sided vs.~two-sided error
is a more interesting comparison than testing vs.~learning. As noted above, this is a comparison 
between trivial and non-trivial algorithms. Also, the general
$O\left(\tfrac{\mathsf{VC}}{\epsilon}\log(1/\epsilon)\right)$ upper bound on one-sided testing
\cite{BFH21} means that the trivial and unique one-sided tester is at most a factor
$\log(1/\epsilon)$ worse than the optimal learning algorithm, but can sometimes be significantly
more efficient, as for monotone functions \cite{BFH21} and juntas \cite{BFH21,BHK26}; see
\cite{Han16,Han19,BHMZ20,AHMLZ24} for discussion of the optimal sample size of learning. So
one-vs.-two-sided error is in some sense a sharper question than testing vs.~learning, and demands
more insight into the algorithmic techniques.

Since every labeling of $n+1$ points in general position can be achieved by a halfspace, a one-sided error tester under the Gaussian distribution requires at least $n+2$ samples. Therefore, any distribution-free halfspace tester using $o(n)$ samples, even if it were only $O(n/\log n)$ samples to match the
previous lower bound, must somehow take advantage of having two-sided error.  This can be done for some problems: the class of functions of support size $\leq k$ (\ie functions $f$ where $|\{ x \colon f(x)
= 1 \}|\leq k$) requires $\Theta(k)$ samples to learn or to test with one-sided error, but only
$\Theta(k / \log k)$ samples to test with two-sided error \cite{GR16,FH26}, thanks to surprising
algorithmic techniques for support-size estimation \cite{VV11,WY19}. 
The impetus for this work was to either find or rule out
similarly surprising techniques for halfspaces;
our lower bound shows that no such techniques exist for halfspaces.
In short, the $\log n$ factor is important for qualitative
understanding of when and how non-trivial testing algorithms can exist.

\section{Lower Bound Proof}
\label{section:main-lb}

We present two proofs of the lower bound:
\begin{flushleft}\begin{enumerate}
\item An intuitive human-generated proof which uses only elementary trigonometry and basic
    concentration of measure, but only works for domain $\bR^n$ instead of $\pmset^n$; and
\item A simple and surprising AI-generated proof for domain $\pmset^n$ that uses a
completely different construction that we did not expect.
\end{enumerate}\end{flushleft}

We present the more intuitive human-generated proof for $\bR^n$ as an ``alternative'' proof in
\cref{section:alternate}. The main body of the paper will present our own (human-written) exposition
of the second proof, because it is stronger to prove the theorem for domain $\pmset^n$ than for
$\bR^n$, and it has the more surprising technique. 

\begin{remark}
\label{remark:ai}
The AI-assisted proof was found in interactions with ChatGPT 5.5 Pro by prompting it to brainstorm
ideas for the problem and then asking it to pursue what it believed was the most promising strategy.
We needed a few further rounds of interaction to correct errors in GPT's proposed proofs. We found
it interesting that the proof was not ``replicable'': when a different author repeated a similar
process, the AI was not able to find the same proof again (though it did find a different but less
interesting and more tedious proof).
\end{remark}

We first give the formal   definition of halfspace testing over the domain $\pmset^n$.

\begin{definition}[Halfspace Testing]
    \label{def:halfspace-testing}
A \emph{distribution-free sample-based} halfspace tester for domain $\pmset^n$ is an algorithm which
takes as input a proximity parameter $\epsilon > 0$ and confidence parameter $\delta > 0$. For any
function $f \colon \pmset^n \to \pmset$ and distribution $\cD$ over $\pmset^n$, the algorithm
requests samples of the form $(\bm x, f(\bm x))$ where each $\bm x \sim \cD$ is drawn independently
from $\cD$. With probability at least $1-\delta$ over the samples and the internal randomness of
the algorithm, the output must satisfy:
\begin{itemize}
\item \emph{Completeness.} If $f$ is a halfspace, the algorithm outputs $\Accept$;
\item \emph{Soundness.} If $\dist_\cD(f,h) > \epsilon$ for all halfspaces $h$, the algorithm
outputs $\Reject$.
\end{itemize}
The sample complexity $m = m(n,\epsilon, \delta)$
is the number of samples drawn by the algorithm in the
worst case.
\end{definition}

\subsection{Lower Bound ``Lock and Key'' Construction}

The proof proceeds by Yao's principle. We construct
meta-distributions $\Dyes$ and $\Dno$ such that:
\begin{flushleft}\begin{enumerate}[itemsep=0pt]
\item A distribution $\cD$ drawn from either $\Dyes$ or $\Dno$ is a distribution over labeled
    samples $(\bm z, \bm b) \in \pmset^n \times \pmset$.
\item Samples $(\bm z, \bm b)$ from distribution $\cD \sim \Dyes$ are labeled according to some
    halfspace (\ie $\bm b = h(\bm z)$ for some halfspace $h$),
    while samples from distribution $\cD \sim \Dno$ are labeled randomly, i.e., under $\cD$ we have that $\bm b$ and $\bm z$ are independent and $\bm b$ is drawn uniformly from $\pmset$,
    and therefore are far from
being consistent with any halfspace.
\item An algorithm that draws $m = o(n)$ samples from the distribution $\cD$ cannot distinguish
    between the cases $\cD \sim \Dyes$ and $\cD \sim \Dno$; specifically, our main lemma is:
\end{enumerate}\end{flushleft}
\begin{lemma}[Indistinguishability]
\label{lemma:indistinguishability}
Let $\TupleS_\yes$ be the distribution of $m$ independent samples drawn from a randomly chosen
distribution $\cD \sim \Dyes$, and let $\TupleS_\no$ be the distribution of $m$ independent samples
drawn from a distribution $\cD \sim \Dno$. Then for $m = o(n)$,
\[
    \dist_\TV(\TupleS_\yes, \TupleS_\no) = o(1) ,
\]
where $\dist_\TV$ denotes the total variation (TV) distance.
\end{lemma}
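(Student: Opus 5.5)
The plan is to apply Yao's principle with two meta-distributions that share the same random support and differ only in how labels are assigned. Once the support is fixed, the whole difficulty is a statement about labels on the few distinct points an $o(n)$-sample algorithm actually sees. The $1/\epsilon$ factor in \cref{thm:main} is handled separately and routinely. We put mass $1-\Theta(\epsilon)$ on a single anchor point $z_0$ with a fixed label, and mass $\Theta(\epsilon)$ on the ``hard part''. Then $m=o(n/\epsilon)$ samples contain only $o(n)$ hard samples with high probability. So it suffices to prove \cref{lemma:indistinguishability} for constant $\epsilon$, which is the $m=o(n)$ statement as written.

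\textbf{Construction.} Draw a random point set $S\subseteq\pmset^n$ of size $N=Cn$ for a large constant $C$. It consists of $n$ \emph{key} points and $N-n$ \emph{lock} points.
\begin{itemize}
\item The key points are $x_i$, the all-ones vector with coordinate $i$ flipped. Under the halfspace $\sign(\inn{\sigma,x}-\sum_j\sigma_j)$ with $\sigma$ uniform in $\pmset^n$, their labels are exactly $-\sigma_i$. These are i.i.d.\ uniform.
\item The lock points have the form $w_A$, the all-ones vector with the coordinates in a set $A$ flipped. The sets $A$ are random, of odd size, and chosen so that the label of $w_A$ is $-\sign(\sum_{i\in A}\sigma_i)$ under a suitable threshold. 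Each lock label thus depends on many key bits.
\item To symmetrize, apply a uniformly random cube automorphism (coordinate permutation together with sign flips) to $S$ and to the halfspace.
\end{itemize}
Under $\Dyes$, points are labeled by this hidden halfspace. Under $\Dno$, the same random support carries a uniformly random labeling \emph{function} $S\to\pmset$. Using a random function rather than fresh per-sample labels makes repeated points consistent in both cases, so birthday collisions among the $m$ samples are harmless.

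\textbf{Farness of $\Dno$.} Sauer's lemma bounds the number of halfspace labelings of any fixed $N$ points by $N^{n+1}$. A Chernoff bound plus a union bound then shows that a uniformly random labeling of $S$ is $\Omega(1)$-far from every halfspace under the uniform measure on $S$, once $C$ is large. The probability of failure is $2^{-\Omega(N)}$, which is negligible.

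\textbf{Indistinguishability.} Couple the two experiments so that they draw the same sample points, and condition on the set $T\subseteq S$ of distinct points seen, with $|T|\le m=o(n)$. Then
\[
\dist_\TV(\TupleS_\yes,\TupleS_\no)\le \bE_T\big[\dist_\TV(\text{labels of }T\text{ under }\Dyes,\ \mathrm{Unif}(\pmset^T))\big].
\]
Labels of the key points in $T$ are exactly uniform. For a lock point $w_A\in T$, condition on the key bits already revealed, namely $\sigma_i$ for the keys in $T$. Its label is $-\sign$ of a sum whose revealed part has size $o(n)$ and whose unrevealed part has size $\Theta(n)$, so the unrevealed part dominates. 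By a CLT/Berry--Esseen bound, each such label is $o(1)$-close to uniform.

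\textbf{Main obstacle.} The hard step is \emph{joint} near-uniformity of several lock labels. If the sets $A$ overlap substantially, the corresponding sums are correlated Gaussians with correlation $\Theta(1)$. The first thing I would try is to choose the lock sets so that their pairwise overlaps are small, and then bound the TV distance of the correlated Gaussian sign vector from uniform by $O(\|E\|_F)$, where $E$ is the off-diagonal correlation matrix. For example, partition $[n]$ into blocks and let each $A$ be a random transversal of the blocks. The danger is the trade-off between two requirements. Overlaps must be small enough that $m$ observed locks still give $\|E\|_F=o(1)$ with $m=o(n)$, rather than only $m=o(\sqrt n)$. Yet the lock labels must be rigid enough that a random labeling of $S$ stays far from all halfspaces. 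If the Gaussian-comparison route only reaches $\sqrt n$, I would fall back on a design in which each observed lock has, with high probability, a ``private'' unrevealed key coordinate with large weight. That would make its label exactly uniform conditioned on everything else seen, and it would replace the CLT argument by an exact symmetry argument.
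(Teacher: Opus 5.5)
The gap is the step you flag as the main obstacle. It is not just a weakness of the $\|E\|_F$ analysis: in your construction, joint near-uniformity is \emph{false} once $m\gg\sqrt n$, so no sharper CLT step can reach $m=o(n)$. (You also analyze your own meta-distributions rather than the lock-and-key $\Dyes,\Dno$ of the statement. That would be harmless for the main theorem if your construction worked.)

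Under your YES distribution, each lock label is exactly the majority of the key labels indexed by $A$. Take your transversal design with $|A|=k$. An observed key $x_i$ and an observed lock $w_A$ with $i\in A$ have label correlation $\widehat{\mathrm{Maj}}_k(\{i\})\approx\sqrt{2/(\pi k)}$. Two locks sharing one coordinate have correlation about $2/(\pi k)$. The tester can locate these pairs, because the random cube automorphism preserves Hamming distances: a key and a lock are at distance $k-1$ if incident and $k+1$ otherwise. Among $m$ samples there are $\Theta(m^2k/n)$ incident key--lock pairs in expectation. Counting label agreements over them therefore has signal-to-noise ratio $\Theta(m/\sqrt n)$.

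This is not specific to transversals. The level-one weight $k\,\widehat{\mathrm{Maj}}_k(\{i\})^2$ of majority lies in $[2/\pi,1]$. Hence the expected sum of squared label correlations over observed key--lock pairs is $\Theta(m^2/n)$ for \emph{any} choice of lock sets.

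Your fallback hits the same birthday barrier. With unit weights $\sigma\in\pmset^n$ there is no heavy private coordinate. And if $Cn$ points draw private coordinates from only $n$ coordinates, then $m$ samples contain about $m^2/n$ pairs whose labels are dominated by the same weight.

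The missing idea is to give the hidden halfspace \emph{real} weights with a translation-invariant prior. Then the labels of any $m<n$ observed points can be changed simultaneously and exactly by a small change of key, and indistinguishability becomes an exact coupling rather than a CLT estimate.

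In the paper, a YES sample is $(\bx\circ\langle\floor{\ba^\top\bx}+\bb\rangle,\bb)$ with $\bx$ uniform on $\pmset^n$ and $\ba$ uniform in $[0,n^3)^n$. A NO sample uses an independent $\bb'$ inside the lock. The halfspace with weights $-\ba$ on $\bx$ and powers of two on the $O(\log n)$ encoding bits recovers $\bb$.

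Let $\bX$ be the observed matrix; it has $\sigma_{\min}(\bX)\ge c\sqrt n$ with high probability when $m=o(n)$. The shift $\bdelta=\bX^\top(\bX\bX^\top)^{-1}(\vec{\bb}-\vec{\bb}')$ satisfies $\bX\bdelta=\vec{\bb}-\vec{\bb}'\in\mathbb{Z}^m$. Hence $\floor{\bX(\ba+\bdelta)}+\vec{\bb}'=\floor{\bX\ba}+\vec{\bb}$, so the YES samples with key $\ba$ coincide exactly with the NO samples built from key $\ba+\bdelta$. Since $\vabs{\bdelta}_\infty\le\vabs{\bdelta}_2\le 2\sqrt m/(c\sqrt n)\le 1/c$, taking $\ba+\bdelta\bmod B$ keeps the key uniform and wraps around with probability at most $2/(cn^2)$.

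Separately, your farness step needs the Sauer--Shelah bound $\binom{N}{\le n+1}\le(eN/(n+1))^{n+1}$. The cruder bound $N^{n+1}$ exceeds $2^{N}$ when $N=Cn$, so the union bound would fail as written.
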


This lemma is the interesting part of the main lower bound in \cref{thm:main}, but to complete the
proof there are some technicalities that we must handle.  First, as to be defined, samples drawn from
distributions $\cD \sim \Dno$ are labeled randomly, whereas to satisfy \cref{def:halfspace-testing},
we need them to be labeled by a fixed function $f:\{\pm 1\}^n\rightarrow \{\pm 1\}$ that is far from being a halfspace. Second, we
need a further extension to the argument to obtain the dependence on $\epsilon$.  Finally, we need to
incorporate dependence on the parameter $\delta > 0$.

These technicalities are handled by standard techniques. Informally, as long as $m$ samples drawn
from the NO distribution have negligible probability of producing any collisions (\ie two sample
points coming from the same domain element), we can turn random labels into a fixed (but still
random) function $f$. And, to get the dependence on $\epsilon$, we can introduce a ``dummy element''
into the domain with probability mass $1-\epsilon$, and place the remaining $\epsilon$ mass
according to the distributions in \cref{lemma:indistinguishability}. Finally, we can prove an
$\Omega \left(\frac{\log(1/\delta)}{\epsilon} \right)$ bound by showing any tester requires $\Omega(\log(1/\delta))$ samples to distinguish the parity function $x_1x_2$ and the dictator $x_1$ with probability $1 - \delta$ and applying the previous dummy element argument to again get the desired dependence on $\epsilon$. These technicalities are
handled formally in \cref{section:technicalities}, and for now we focus only on proving
\cref{lemma:indistinguishability}.

Let us begin by defining how we construct our hard distributions $\Dyes$ and
$\Dno$, and then we will give a proof overview and full proof of
\cref{lemma:indistinguishability} in \cref{section:indistinguishability}.

\subsubsection*{Lock and Key Construction}

For notational convenience, we will construct distributions over domain $\pmset^{n + \ell}$ instead
of $\pmset^n$; we will take $\ell = \ell(n) = \lceil 10 \log n \rceil$, so this change of variables
does not affect the bounds.

Our meta-distributions $\Dyes$ and $\Dno$ are both parameterized by a vector $\bm a \sim [0,B)^n$,
drawn uniformly at random from a large box with $B = n^3$, which we call a ``\emph{key}''\!. Given any  fixed
key $a \in [0,B)^n$, we define distributions $\Dyes^a$ and $\Dno^a$ over $\{\pm 1\}^n\times \{\pm 1\}$ below.
This then finishes the construction of $\Dyes$ and $\Dno$: to draw $\cD\sim \Dyes$ (or $\cD\sim \Dno$), one first draws a key $\ba\sim [0,B)^n$ and then sets $\cD$ to be $\Dyes^{\bm a}$ (or $\Dno^{\bm a}$, respectively).

\subsubsection*{The Distribution $\Dyes^a$}

To draw a sample $(\bm z,\bm b)\sim \Dyes^a$, we start with a uniformly random hypercube point $\bm x
\sim \pmset^n$ and a uniformly random label $\bm b\sim \{\pm 1\}$. 
We then use $\bm x$ and $\bm b$ to obtain
\[
    \bm z = \Lock_a(\bx, \bb) \in \pmset^{n + \ell} ,
\]
where the ``\emph{locking mechanism}'' is defined as
\begin{equation}
    \label{eq:locking-mechanism}
    \Lock_a(x, b) \define x \circ \left\langle \lfloor a^\top x \rfloor + b \right\rangle
\end{equation}
with $\circ$ denoting concatenation, and $\langle k \rangle \in \pmset^\ell$ denoting the binary
representation in two's complement with $-1$s in place of $0$s
for the integer $k \in [-2^{\ell-1}, 2^{\ell-1} -1]$, \ie the string $w \in \{\pm 1\}^\ell$ corresponds to $-2^{\ell-1} \cdot (1 + w_1)/2 + \sum_{i=2}^\ell 2^{\ell-i} \cdot (1 + w_i)/2$.
When $B = n^3$, it will suffice to use $\ell = O(\log n)$ bits to represent $\lfloor a^\top x \rfloor + b$.
This finishes the description of $\Dyes^a$.

The reason that samples $(\bm z,\bm b)\sim \Dyes^a$ are labeled according to a halfspace is that,  using the ``key'' $a$, we can define a halfspace $h_a$ that
``\emph{unlocks}'' the mechanism to recover the value $b$:
\begin{proposition}
    \label{lemma:yes-halfspace}
    For each $a \in [0, B)^n$, there exists a halfspace $h_a : \pmset^{n+\ell} \to \pmset$ such that
    \[
        h_a\big(\Lock_a(x, b)\big) = b, \quad \text{for each $x \in \pmset^n$ and $b \in \pmset$.}
    \]
\end{proposition}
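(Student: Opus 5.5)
We will write the halfspace explicitly. Let $z = y \circ w$ with $y \in \pmset^n$ and $w \in \pmset^\ell$, and let $\mathrm{val}(w) \define -2^{\ell-1}\cdot\frac{1+w_1}{2} + \sum_{i=2}^{\ell} 2^{\ell-i}\cdot\frac{1+w_i}{2}$ be the two's-complement decoding map. This map is an affine function of $w$. The candidate is
\[
    h_a(y \circ w) \define \sign\big(\mathrm{val}(w) - a^\top y - \tfrac12\big),
\]
which is a halfspace on $\pmset^{n+\ell}$ because its argument is affine in $(y,w)$. It has weight vector $(-a, \text{coefficients of } \mathrm{val})$ and a suitable threshold.

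The verification proceeds in two steps. First, decoding is correct: $\mathrm{val}(\langle k \rangle) = k$ for every integer $k$ in the representable range. We must check that $k = \lfloor a^\top x \rfloor + b$ lies in $[-2^{\ell-1}, 2^{\ell-1}-1]$. Since $a \in [0,B)^n$ and $x \in \pmset^n$, we have $|a^\top x| < nB = n^4$. Hence $|k| \le n^4 + 1$, and this is less than $2^{\ell-1} \ge n^{10}/2$ for $n \ge 2$ with $\ell = \lceil 10 \log n \rceil$.

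Second, with $z = \Lock_a(x,b)$ we have $y = x$, so the argument of the sign is
\[
    \lfloor a^\top x \rfloor + b - a^\top x - \tfrac12 = b - \{a^\top x\} - \tfrac12,
\]
where $\{t\} \in [0,1)$ is the fractional part. If $b = 1$, this equals $\tfrac12 - \{a^\top x\} \in (-\tfrac12, \tfrac12]$. That is not always positive, so this threshold needs adjusting.

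To fix it, take the threshold $c$ so that the argument is $b - \{a^\top x\} + c$, which should be positive when $b=1$ and negative when $b=-1$. This requires $1 + c > \{a^\top x\}$ and $-1 + c < \{a^\top x\}$ for all possible fractional parts in $[0,1)$. The choice $c = 0$ works. For $b=1$ the argument is $1 - \{\cdot\} > 0$. For $b=-1$ it is $-1 - \{\cdot\} < 0$. So we set
\[
    h_a(y\circ w) = \sign\big(\mathrm{val}(w) - a^\top y\big),
\]
and the claim follows.

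The only real obstacle is the range check in the first step: confirming that $\ell$ bits suffice so that the encoding does not overflow. This is where $B = n^3$ and $\ell = \lceil 10\log n\rceil$ are used. The rest is immediate, since the fractional-part slack is strictly less than $1$ while $b$ shifts the argument by $\pm 1$.
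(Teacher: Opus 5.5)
Your proof is correct and is essentially the paper's: your final $h_a(y\circ w)=\sign(\mathrm{val}(w)-a^\top y)$ is exactly $\sign(\Unlock_a(z))$, and it works for the same reason, namely $\lfloor a^\top x\rfloor - a^\top x \in (-1,0]$ while $b=\pm 1$. Your explicit overflow check ($|\lfloor a^\top x\rfloor + b| \le n^4+1 < 2^{\ell-1}$) supplies a detail the paper only asserts; in a final write-up you can drop the abandoned threshold $-\tfrac12$.
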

\begin{proof}
    The affine function $\Unlock_a : \pmset^{n+\ell} \to \bR$ given by
    \begin{equation}\label{eq:hehe1}
        \Unlock_a(z) \define \sum_{i=1}^n (-a_i) z_i - 2^{\ell-1} \pbra{\frac{z_{n+1}+1}{2}}
            + \sum_{i=2}^{\ell} 2^{\ell-i} \pbra{\frac{z_{n+i}+1}{2}}
    \end{equation}
    satisfies $$\Unlock_a(z) = -a^\top x + \lfloor a^\top x\rfloor + b$$ for each $z = \Lock_a(x, b)$ 
    (the
    first term of \Cref{eq:hehe1} computes $-a^\top x$ and the remaining terms compute 
    $\lfloor a^\top x\rfloor+b$).
    Since $\abs{\lfloor a^\top x\rfloor - a^\top x} < 1$ while $\abs{b} = 1$, we always
    have $\sign(\Unlock_a(z)) = b$, and hence the halfspace $h_a(z) \define
    \sign(\Unlock_a(z))$ satisfies the claim.
\end{proof}

We have therefore established that for any key $a \in [0, B)^n$,  any sample $(\bm z,\bm b)\sim \Dyes^a$,
  with $\bm z = \Lock_{ a}(\bm
x, \bm b)$, satisfies  $\bm b = h_{a}(\bm z)$ and thus, is generated by the halfspace $h_{a}$, 
as desired.

\subsubsection*{The Distribution $\Dno^a$}

It remains to construct the NO distribution $\Dno^a$. This distribution is nearly identical, except
that we choose independently and uniformly random $\bm x \sim \pmset^n$, $\bm b \sim \pmset$, and $\bm b' \sim
\pmset$, and then take the labeled sample
\[
(\bm z,\bm a) = \big( \Lock_a(\bm x, \bm b'), \bm b\big) .
\]
That is, we replace the label $\bm b$ with an independent $\bm b'$ in the locking mechanism, so that
the label $\bm b$ is independent of the locked value; in particular, each sample receives an
independently random label $\bm b$, as desired.

Observe that, in both $\Dyes^a$ and $\Dno^a$, the marginal distribution of $\bm z$ is highly involved, which is how the 
lower bound construction takes advantage of
the distribution-free feature of the model.

\subsection{Proof of Indistinguishability (\cref{lemma:indistinguishability})}
\label{section:indistinguishability}

\subsubsection{Proof Overview}
Recall that the random variable $\TupleS_\yes$ is $m$ samples drawn from $\Dyes^{\bm a}$, where $\ba
\sim [0,B)^n$. Similarly, $\TupleS_\no$ is $m$ samples drawn from $\Dno^{\bm a}$. We will prove
indistinguishability of these two distributions by giving a coupling, \ie a joint distribution $\pi$
of $(\TupleS_\yes, \TupleS_\no)$ such that
\[
    \Pru{(\TupleS_\yes, \TupleS_\no) \sim \pi}{ \TupleS_\yes \neq \TupleS_\no } = o_n(1) .
\]
To construct the coupling, we first let $\bm X \sim \pmset^{m \times n}$ be a random matrix where
each row $\bm x_i \sim \pmset^n$ is a uniformly random hypercube vector, and let $\vec{\bm b}$,
$\vec{\bm b'} \sim \pmset^m$ be independent uniform vectors. The pairs $(\bm x_i,\vec{\bm b}_i)$ and $(\bm x_i,\vec{\bm b}_i')$ will be the inputs to the locking
mechanism.

Let $\bm a \sim [0, B)^n$. In our coupling, we generate $\TupleS_\yes$ by taking the $i$-th labeled
sample as
\[
    \big(\Lock_{\bm a}(\bm x_i, \vec{\bm b}_i), \vec{\bm b}_i\big) = \left( \bm x_i \circ \big\langle \lfloor \bm a^\top \bm x_i\rfloor 
        + \vec{\bm b}_i  \big\rangle, \vec{\bm b}_i \right) .
\]
We now need a way to generate $\TupleS_\no$, with inputs $\vec{\bm b'}$ instead of $\vec{\bm b}$ in
the locking mechanism, such that the labeled samples are identical with high probability. If $\bm X$ is
full rank (which occurs with very high probability when $m = o(n)$), we can solve the linear system
$$
\vec{\bm b} - \vec{\bm b'} = \bm X \bm \Delta
$$
in variables $\bm \Delta\in \bR^n$ to obtain a vector $\bm \Delta\in \bR^n$ that satisfies
\[
    \bm X \ba + \vec{\bm b}  = \bm X (\bm a + \bm \Delta) + \vec{\bm b'}.
\]
Therefore, if we use the vector $\bm a' =
\bm a + \bm \Delta$ as the ``key'' to generate samples in $\TupleS_\no$, we would have
\[
    \big(\Lock_{\bm a + \bm \Delta}(\bm x_i, \vec{\bm b}_i'), \vec{\bm b}_i\big)
    = \big(\Lock_{\bm a}(\bm x_i, \vec{\bm b}_i), \vec{\bm b}_i\big) ,
\]
and the samples would match. However, we also need $\bm a'$ to be uniformly distributed in
$[0,B)^n$, which is not the case. We can achieve this by setting each coordinate $j$ as
$\bm a'_j = \bm a_j + \bm \Delta_j \mod B$, which gives the desired results as
long as none of the coordinates ``wrap around''; the proof concludes by showing that this happens
with low probability, which depends on the singular values of $\bm X$.

\subsubsection{Proof}

We now formalize the above proof overview. Let $m=o(n)$. We define the joint distribution $\pi$ over  pairs of $m$-tuples
$(\TupleS_\yes, \TupleS_\no)$ as follows:
\begin{flushleft}\begin{enumerate}
    \item Sample uniformly and independently $\bx_i \sim \pmset^n$ for $i \in [m]$ and $\vec{\bb}, \vec{\bb}' \sim
        \pmset^m$.
    \item Let $\bX \in \pmset^{m \times n}$ be the matrix whose rows are the vectors $\bx_i$. If the
        smallest singular value $\sigma_{\min}(\bX) \ge c \sqrt{n}$, where $c > 0$ is a sufficiently
        small absolute constant to be chosen later (as we will see this is the typical case), let\footnote{The choice of $\bdelta$ comes from
        applying the pseudoinverse of $\bX$.}
        \begin{equation}
            \label{eq:delta}
            \bdelta \define \bX^\top (\bX \bX^\top)^{-1} (\vec{\bb} - \vec{\bb}') \,,
        \end{equation}
        which satisfies $\bX \bdelta = \vec{\bb} - \vec{\bb}'$. Otherwise, let $\bdelta = \bot$.
    \item In the (typical) case that $\bdelta \neq \bot$, 
        sample $\ba_\yes \sim [0, B)^n$ and let $\ba_\no \define \ba_\yes + \bdelta \mod B$. (The
        mod operation is taken element-wise.)
        Otherwise (if $\bdelta = \bot$), sample independent $\ba_\yes, \ba_\no \sim [0, B)^n$. 
    \item For each $i \in [m]$, let $$\bz_i \define \Lock_{\ba_\yes}(\bx_i, \vec{\bb}_i)\quad\text{and}\quad \bz'_i
        \define \Lock_{\ba_\no}(\bx_i, \vec{\bb}'_i).$$ Output $(\TupleS_\yes, \TupleS_\no)$ where
        $
            \TupleS_\yes  \define \big(({\bz_i, \vec{\bb}_i}):i\in [m]\big)$ and $
 \TupleS_\no \define \big(({\bz'_i, \vec{\bb}_i}):i\in [m]\big)$.
\end{enumerate}\end{flushleft}

The following proposition shows that $\pi$ is indeed a coupling.

\begin{proposition}
    \label{lemma:coupling}
    The joint distribution $\pi$ is a coupling between $\TupleS_\yes$ and $\TupleS_\no$.
\end{proposition}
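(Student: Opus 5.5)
We must show that under $\pi$ the first marginal has the law of $\TupleS_\yes$ and the second the law of $\TupleS_\no$. The plan is to condition on the common randomness $(\bX, \vec{\bb}, \vec{\bb}')$, which is jointly uniform on $\pmset^{m\times n}\times\pmset^m\times\pmset^m$ and independent of everything else. It then suffices to show that, conditioned on any fixed value $(X, \vec b, \vec b')$, the key $\ba_\yes$ is uniform on $[0,B)^n$, and likewise for $\ba_\no$.

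\textbf{First marginal.} In both branches ($\bdelta\neq\bot$ and $\bdelta=\bot$), $\ba_\yes$ is sampled uniformly from $[0,B)^n$ independently of $(\bX,\vec{\bb},\vec{\bb}')$. Hence $\TupleS_\yes = ((\Lock_{\ba_\yes}(\bx_i,\vec{\bb}_i),\vec{\bb}_i))_{i}$ is generated exactly as in the definition of $\TupleS_\yes$: a uniform key, then i.i.d.\ uniform $(\bx_i,\vec{\bb}_i)$ fed to $\Lock$ with label $\vec{\bb}_i$. The variables $\vec{\bb}'$ are simply ignored.

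\textbf{Second marginal.} In the branch $\bdelta=\bot$, $\ba_\no$ is fresh uniform and independent, so we are done. In the branch $\bdelta\neq\bot$, $\bdelta$ is a deterministic function $\Delta$ of the conditioned values. The map $a\mapsto a+\Delta \bmod B$ (coordinatewise) is a translation on the torus $[0,B)^n$. It is a bijection that preserves Lebesgue measure: on each coordinate it is piecewise a translation of two intervals. Therefore $\ba_\no$ is uniform on $[0,B)^n$ and independent of $(\bX,\vec{\bb},\vec{\bb}')$. It follows that $\TupleS_\no=((\Lock_{\ba_\no}(\bx_i,\vec{\bb}'_i),\vec{\bb}_i))_i$ has uniform key and i.i.d.\ uniform independent $\bx_i,\vec{\bb}'_i,\vec{\bb}_i$. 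This is exactly the sampling procedure for $m$ samples from $\Dno^{\ba}$ with $\ba\sim[0,B)^n$.

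The only point requiring care is this independence. The event $\{\bdelta\neq\bot\}$ and the value $\bdelta$ depend on $(\bX,\vec{\bb},\vec{\bb}')$, so we must argue conditionally rather than marginally. Conditional uniformity of the key, given each fixed value of the common randomness, is exactly what yields independence after integrating out, and this completes the proof.
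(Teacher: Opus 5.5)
Your proof is correct and takes essentially the same approach as the paper's: condition on $(\bX,\vec{\bb},\vec{\bb}')$, observe that $\ba_\no$ is either a fresh uniform key or a coordinatewise mod-$B$ translate of the uniform $\ba_\yes$ by a fixed vector, and conclude uniformity by translation invariance. Your explicit point that conditional uniformity gives independence of the key from the locking inputs is extra care that the paper leaves implicit.
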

\begin{proof}
It follows from the construction that the first marginal is
   $\TupleS_\yes$.
For the second,
    conditional on each fixed $\bX$, $\vec{\bb}$, and $\vec{\bb}'$, there are two cases:
    either $\bdelta = \bot$ and $\ba_\no$ is uniformly distributed by definition, or $\bdelta$ is a
    fixed vector and then $\ba_\no = \ba_\yes + \bdelta \mod B$ is again uniformly distributed in
    $[0, B^n)$, since $\ba_\yes$ is and by translation invariance. Thus the second marginal of
    $\pi$ is  $\TupleS_\no$.
\end{proof}

We now complete the proof of \cref{lemma:indistinguishability} by showing that,
when $m = o(n)$, the coupling $\pi$ satisfies
\[
        \Pru{\pbra{\TupleS_\yes, \TupleS_\no} \sim \pi}{\TupleS_\yes \ne \TupleS_\no}
        = o_n(1) \,.
\]
We will use a standard concentration bound on the spectrum of subgaussian matrices:

\begin{theorem}[Special case of \cite{Ver26}, Theorem~4.6.1]
\label{thm:singular-value}
Let $\bm X$ be an $m \times n$ random matrix with independent, mean-zero, subgaussian, isotropic
rows $\bm x_i$. Then for all choices of $c > 0$, the smallest singular value $\sigma_{\min}(\bX)$
    satisfies
\[
    \sigma_{\min}(\bX) \geq \sqrt n - C(\sqrt m + c\sqrt n)
\]
with probability at least $1-2 e^{-c^2n}$, where $C > 0$ is a universal constant.
\end{theorem}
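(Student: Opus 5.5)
The statement is a cited special case of \cite[Theorem~4.6.1]{Ver26}, so it can be used as given. For completeness, I would reprove it in the only case used below: $\bX$ has i.i.d.\ uniform $\pmset$ entries and $m \le n$. The plan is the standard $\epsilon$-net argument applied to $\bX^\top$. Since $m \le n$, we have $\sigma_{\min}(\bX) = \min_{u \in \bS^{m-1}} \vabs{\bX^\top u}_2$. It therefore suffices to show
\[
\vabs{\tfrac1n \bX\bX^\top - I_m} \le \max(\eta,\eta^2), \qquad \eta \define C'\pbra{\sqrt{m/n} + c},
\]
with the stated probability. This gives $\vabs{\bX^\top u}_2^2 \ge n(1-\max(\eta,\eta^2))$ for every unit $u$. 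Taking square roots gives $\sigma_{\min}(\bX) \ge \sqrt n(1-\eta)$ when $\eta \le 1$, and the bound is trivial when $\eta>1$. This is exactly $\sqrt n - C(\sqrt m + c\sqrt n)$.

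\emph{Step 1: fixed direction.} Fix $u \in \bS^{m-1}$. Write $\bX^\top u = (\inn{\bm y_1,u},\dots,\inn{\bm y_n,u})$, where $\bm y_j \in \pmset^m$ are the columns of $\bX$. Because the entries are i.i.d., the columns are independent, and each $\inn{\bm y_j,u}$ is a Rademacher sum. Such a sum is subgaussian with $O(1)$ norm and second moment exactly $1$. Hence $\vabs{\bX^\top u}_2^2 - n$ is a sum of $n$ independent mean-zero subexponential variables of $O(1)$ norm. Bernstein's inequality then gives
\[
\Pr{\abs{\tfrac1n\vabs{\bX^\top u}_2^2 - 1} \ge \eta/2} \le 2\exp\pbra{-c_0 n \min(\eta,\eta^2)} .
\]

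\emph{Step 2: net and union bound.} Take a $1/4$-net $\cN$ of $\bS^{m-1}$ with $\abs{\cN} \le 9^m$. For a symmetric matrix $A$, one has $\vabs{A} \le 2\max_{u\in\cN}\abs{u^\top A u}$. Apply this to $A = \frac1n\bX\bX^\top - I_m$ and union bound over $\cN$. The failure probability is at most
\[
9^m \cdot 2\exp\pbra{-c_0 n\min(\eta,\eta^2)} .
\]
With $\eta = C'(\sqrt{m/n}+c)$, we have $n\min(\eta,\eta^2) \ge n\eta^2 \wedge n\eta \gtrsim C'^2(m + c^2 n)$ in the relevant regime $\eta\le 1$. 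Choosing $C'$ large absorbs the $9^m$ factor and leaves a bound of $2e^{-c^2 n}$.

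\emph{Main obstacle.} The only delicate point is the constant bookkeeping in Step 2. I need $C'$ large enough that $c_0 C'^2(m+c^2n) \ge m\ln 9 + c^2 n$ uniformly in $c>0$. I also need to handle separately the case $\eta > 1$, where $\min(\eta,\eta^2)=\eta$; that case is harmless because the claimed bound is then vacuous.

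For the general statement with merely independent isotropic subgaussian rows, the columns need not be independent. There I would simply defer to \cite{Ver26}. Only the i.i.d.\ $\pmset$ case is needed to bound $\Pr{\bdelta=\bot}$ below.
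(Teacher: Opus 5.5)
Your argument is correct for the case the paper actually needs: i.i.d.\ uniform $\pmset$ entries with $m \le n$. It is essentially the proof behind the citation. The paper gives no proof of its own, and Vershynin's proof of Theorem~4.6.1 is the same Bernstein-plus-$\tfrac14$-net argument, applied to $\bX^\top$; the rows of $\bX^\top$ (the columns of $\bX$) are independent isotropic subgaussian vectors in $\bR^m$. Your bookkeeping also closes. Take $c_0 C'^2 \ge \max(\ln 9, 1)$ and use $(\sqrt m + c\sqrt n)^2 \ge m + c^2 n$; then the union bound gives $2e^{-c^2 n}$ whenever $\eta \le 1$, and the case $\eta > 1$ is vacuous.

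What you should drop is the closing remark that the version with only independent rows can be deferred to \cite{Ver26}. That theorem assumes independent rows of an $m \times n$ matrix and bounds $\sigma_n$ below by $\sqrt m - CK^2(\sqrt n + t)$, which carries no information when $m \le n$. The form stated in the paper only comes out after transposing, so it needs independent columns.

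With only independent rows, the statement is in fact false. Take $m = 1$ and $\bx_1 = \bm r\sqrt n\,\btheta$, where $\btheta$ is uniform on $\bS^{n-1}$ and $\bm r$ is independent and equals $1/\sqrt2$ or $\sqrt{3/2}$ with probability $1/2$ each. This vector is mean-zero, isotropic and $O(1)$-subgaussian. But $\sigma_{\min}(\bX) = \vabs{\bx_1}_2 = \sqrt{n/2}$ with probability $1/2$. For $c = 1/(4C)$ the claimed threshold is $\tfrac34\sqrt n - C$, which exceeds $\sqrt{n/2}$ for large $n$. So the bound fails with probability $1/2$, which is larger than the allowed $2e^{-n/(16C^2)}$.

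So the column independence you used in Step~1 is not a convenience; it is exactly the hypothesis that makes the statement true. The paper's application has i.i.d.\ entries, so nothing downstream is affected.
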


    Given that $C$ is a universal constant, by setting $c$ small enough, we ensure that $\sigma_{\min}(\bm X)\ge c\sqrt{n}$ with probability $1-o_n(1)$. 
    We finish the proof of \cref{lemma:indistinguishability}
    by showing
\begin{flushleft}\begin{enumerate}
\item When $\sigma_{\min}(\bX)\ge c\sqrt{n}$ and $\ba_\no=\ba_\yes+\bm \Delta$, we have $\TupleS_\yes=\TupleS_\no$;\footnote{Note that here, unlike in the construction of $\pi$, 
    ``$\ba_\yes+\bdelta$'' refers to vector addition  without a $\mathrm{mod}\ B$ operation.} and
\item Conditional on $\sigma_{\min}(\bX)\ge c\sqrt{n}$, we have $\ba_\no=\ba_\yes+\bm\Delta$ with probability at least $1-o_n(1)$.
\end{enumerate}\end{flushleft}
\cref{lemma:indistinguishability} then follows by a union bound with the event $\sigma_{\min}(\bX)\ge c\sqrt{n}$.

For the first item, assuming that 
  $\sigma_{\min}(\bX)\ge c\sqrt{n}$ and ${\bm a}_\no={\bm a}_\yes+\bm \Delta$,
    we write
    \[
        \btheta_\yes \define \floor{\bX \ba_\yes} + \vec{\bb}
        \quad \text{and} \quad
        \btheta_\no \define \floor{\bX \ba_\no} + \vec{\bb}' \,.
    \]
    Then the vectors $\bz_i$ (resp.\ $\bz'_i$)
    depend only on the vectors $\bx_i$ and the rows of $\btheta_\yes$ (resp.\ $\btheta_\no$) via
    concatenation and binary representation operations in the locking mechanism
    \eqref{eq:locking-mechanism}.
    Hence it suffices to show
    that $\btheta_\yes = \btheta_\no$. Using the integrality of $\bX \bdelta = \vec{\bb} -
    \vec{\bb}'$, we have
    \[
        \btheta_\no
        = \floor{\bX \ba_\no} + \vec{\bb}'
        = \floor{\bX (\ba_\yes + \bdelta)} + \vec{\bb}'
        = \floor{\bX \ba_\yes} + \bX \bdelta + \vec{\bb}'
        = \floor{\bX \ba_\yes} + \vec{\bb}
        = \btheta_\yes \,,
    \]
    as claimed.

    Next, we prove the second item. It suffies to show that $\ba_\yes + \bdelta \in [0, B)^n$ holds with probability $1-o_n(1)$, conditioning on $\sigma_{\min}(\bX)\ge c\sqrt{n}$.
    By \eqref{eq:delta} we have, using the
    Cauchy-Schwarz inequality,
    \[
        \vabs{\bdelta}_2
        = \vabs{\bX^\top (\bX \bX^\top)^{-1} (\vec{\bb} - \vec{\bb}')}_2
        \le \vabs{\bX^\top (\bX \bX^\top)^{-1}}_2 \vabs{\vec{\bb} - \vec{\bb}'}_2
        \le 2\sqrt{m} \cdot \vabs{\bX^\top (\bX \bX^\top)^{-1}}_2 \,.
    \]
    To bound the final term, write the singular-value decomposition $\bX = U \Sigma V^\top$,
    where $U \in \bR^{m \times m}$, $\Sigma \in \bR^{m \times n}$ and $V \in \bR^{n \times n}$, and
    note that
    \[
        \bX^\top (\bX \bX^\top)^{-1}
        = V \Sigma^\top U^\top (U \Sigma \Sigma^\top U^\top)^{-1}
        = V \Sigma^\dagger U^\top \,,
    \]
    where $\Sigma^\dagger \in \bR^{n \times m}$ is diagonal with $\Sigma^\dagger_{i,i} = 1 /
    \Sigma_{i,i}$. Thus $\vabs{\bX^\top (\bX \bX^\top)^{-1}}_2 = \frac{1}{\sigma_{\min}(\bX)}$ and by \Cref{thm:singular-value},
    \[
        \vabs{\bdelta}_\infty
        \le \vabs{\bdelta}_2
        \le \frac{2\sqrt{m}}{c\sqrt{n}}
        \le \frac{1}{c} \,.
    \]
    Therefore a sufficient condition for $\ba_\yes + \bdelta \in [0, B)^n$ is that $\ba_\yes \in
    (c^{-1}, B - c^{-1})^n$, and by a union bound over the $n$ independent coordinates of $\ba_\yes$ this holds with probability at least
    \[
        1 - n \cdot \frac{2 c^{-1}}{B} = 1 - \frac{2}{cn^2}=1-o_n(1)
    \]
    by our choice of $B = n^3$. 
    This finishes the proof of the second item and that of 
    \cref{lemma:indistinguishability}.

\section{Upper Bound Proof}
\label{section:ub}

\newcommand{\COMP}{\mathsf{comp}}
\newcommand{\DECOMP}{\mathsf{decomp}}

We prove a general upper bound on one-sided-error distribution-free sample-based testers, using
\emph{stable sample-compression schemes}.
Sample-compression schemes are attributed to \cite{LW86}. \emph{Stable}
sample-compression is a strengthening defined by \cite{BHMZ20}:

\begin{definition}[Stable Sample-Compression Scheme]\label{def:hehe2}
Let $\cH$ be a class of functions $\cX \to \pmset$. A \emph{stable sample-compression scheme} of size
$\ell$ for $\cH$ is a pair of algorithms $(\COMP, \DECOMP)$ such that, for all $m$ and every labeled
set of samples $S = \{ (x_1, h(x_1)), \dotsc, (x_m, h(x_m)) \}$ with $h \in \cH$:
\begin{enumerate}
\item $\COMP(S)$ outputs a set $\COMP(S) \subseteq S$ with $|\COMP(S)| \leq \ell$.
\item $\DECOMP(\COMP(S))$ outputs a function $g \in \cH$ such that $g(x_i) = h(x_i)$ on all $(x_i,
    h(x_i)) \in S$.
\item For any $S'$ with $\COMP(S) \subseteq S' \subseteq S$, we have $\COMP(S) = \COMP(S')$.
\end{enumerate}
\end{definition}

We must also define certificates of non-membership for a fixed class of functions $\cH$. For a set
$C \subseteq \cX$ and function $f \colon \cX \to \pmset$, we write $f(C) \define \{ (x, f(x)) \mid x
\in C \}$ for the $f$-labeled set of samples corresponding to $C$.

\begin{definition}[Certificates and minimal certificates]
Let $\cH$ be any set of functions $\cX \to \pmset$ and let $f \colon \cX \to \pmset$ satisfy $f
\notin \cH$. A set $C \subseteq \cX$ is a \emph{certificate of non-membership} for $f$ if $f(C) \neq
h(C)$ for all $h \in \cH$. A certificate $C$ is \emph{minimal} if none of its proper subsets are
certificates. The \emph{minimal certificate size} of $f$ is the size of the largest \emph{finite} minimal
certificate; for convenience, we say the minimal certificate size is 0 if all certificates are infinite.
\end{definition}

Then our general theorem is the following:

\begin{boxtheorem}
\label{thm:general-1-sided}
Let $\cH$ be any class of functions $\cX \to \pmset$ which admits a stable sample-compression scheme
of size $\ell$, and where every $f \notin \cH$ has minimal certificate size at most $c$. Then
there is a distribution-free sample-based tester for $\cH$ with one-sided error and sample
complexity $O\left(\frac{\ell + c + \log(1/\delta)}{\epsilon}\right)$.
\end{boxtheorem}

\begin{remark}
\label{rem:finitely-certifiable}
We have carefully defined the minimal certificate size of $f \notin \cH$ to
count only the \emph{finite} minimal certificates. A function may have
certificates of infinite size for which no subset is minimal, even if it also
has finite certificates. For example, the function $f \colon \bR^n \to \pmset$
which takes value 1 on exactly the set $\{ x \mid (x_2 > 0 \wedge x_1 > 0) \vee
(x_2 \leq 0 \wedge x_1 \geq 0 ) \}$ is not a halfspace, but it appears not to
have any minimal certificates, and it is consistent with a halfspace on every
finite set $X \subset \bR^n$. We say that such a function has minimal
certificate size 0. If we swap the example function's values on the points
$-\vec 1, \vec 0$, it has certificates of infinite size, and also a minimal
certificate $(-\vec 1, \vec 0, \vec 1)$ of size 3, so its minimal certificate size is 3.
\end{remark}

\subsection{Upper Bound for Halfspaces}

\cref{thm:general-1-sided} implies our upper bound for testing halfspaces
with one-sided error.

\begin{corollary}
\label{cor:halfspaces}
For any $\epsilon, \delta > 0$,
there is a distribution-free sample-based halfspace tester over $\bR^n$ with one-sided error, failure probability
at most $\delta$ and sample complexity $O\left(\frac{n + \log(1/\delta)}{\epsilon}\right)$.
\end{corollary}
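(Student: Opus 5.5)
The plan is to apply \cref{thm:general-1-sided} with $\cH$ the class of halfspaces over $\bR^n$, which reduces the corollary to two facts: halfspaces admit a stable sample-compression scheme of size $\ell = O(n)$, and every $f \notin \cH$ has minimal certificate size $c = O(n)$. Plugging $\ell, c = O(n)$ into the bound $O\left(\frac{\ell + c + \log(1/\delta)}{\epsilon}\right)$ then gives the claimed sample complexity.

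For the compression scheme, I would take the SVM-based stable scheme of \cite{BHMZ20}. That work shows that halfspaces (with a bias term, via the standard lift $x \mapsto (x,1) \in \bR^{n+1}$) admit a stable sample-compression scheme of size $O(n)$, namely $n+2$. The scheme keeps a minimal set of support vectors determining the max-margin separator, and it is stable because removing non-support points does not change the SVM solution. One technicality is that the SVM separator is a strict halfspace $\sign(\inn{w,x}-t)$ with positive margin on finite consistent samples. Since any finite sample labeled by a halfspace is strictly separable after perturbing the threshold, this matches the definition of halfspaces used here. I would simply cite this construction rather than reprove it.

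For certificate size, I would show that any finite minimal certificate has size at most $n+2$. Let $C$ be a finite set that is not realizable by a halfspace. Writing $C_+$ and $C_-$ for the positively and negatively labeled points, and noting that $C$ is finite, non-realizability is equivalent to $\mathrm{conv}(C_+) \cap \mathrm{conv}(C_-) \neq \emptyset$. This uses strict separation of disjoint compact convex sets, and it also has to handle the boundary convention of $\sign$ at $0$, since a finite set allows a slight shift of the threshold. Given a common point $p$ of the two hulls, Carathéodory-type reasoning bounds the number of points needed. The cleanest route is to lift to $\bR^{n+1}$ with vectors $(x,1)$ for positive points and $-(x,1)$ for negative points. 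Non-realizability is then infeasibility of the strict system $\inn{v, y} > 0$ over all lifted points $y$. By Gordan's theorem this is equivalent to $0$ lying in the convex hull of the lifted points. Carathéodory's theorem in $\bR^{n+1}$ then gives a subset of at most $n+2$ lifted points whose convex hull contains $0$, and this subset is itself a certificate. A minimal certificate therefore has at most $n+2$ points, so $c \le n+2$.

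The main obstacle I expect is the compression-scheme step, specifically checking that the \cite{BHMZ20} scheme satisfies exactly the stability property (item 3 of \cref{def:hehe2}) with size $O(n)$ for affine halfspaces on $\bR^n$. This requires care with the lifting to homogeneous halfspaces and with the sign convention at $0$. The certificate bound, by contrast, is a routine Gordan/Carathéodory argument.
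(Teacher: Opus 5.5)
Your proposal is correct and follows the paper's proof. You apply \cref{thm:general-1-sided} with the SVM-based stable compression scheme from \cite{BHMZ20} and bound the finite minimal certificate size by $n+2$. The paper quotes compression size $n+1$ rather than your $n+2$, which is immaterial. The only real difference is that the paper simply cites Kirchberger's theorem for the certificate bound, whereas you reprove it via the lift to $\bR^{n+1}$, Gordan's theorem, and Carath\'eodory's theorem, which is the standard proof of Kirchberger's theorem.
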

\begin{proof}
As stated in \cite{BHMZ20}, the SVM algorithm implies a stable sample-compression
scheme of size $n+1$ for halfspaces; the proof is attributed to \cite{VC74,LL20}. So,
to apply \cref{thm:general-1-sided}, we must verify that for every $f \notin \cH$, its finite minimal certificates have size at most $n+2$. This is exactly Kirchberger's theorem
(see \eg \cite{RS50}), which states the following.  Let $X \subseteq \bR^n$ be any finite set and $f
\colon X \to \pmset$ any function that does not agree with any halfspace. Then there exists $C
\subseteq X$ of size $|C| \leq n+2$ on which $f$ does not agree with any halfspace.
\end{proof}

\subsection{Proof of General Upper Bound}

We require the following theorem of \cite{BHMZ20}, which is not stated
explicitly in the following form but can be seen by examining the algorithm in
their proof. Recall that a \emph{proper} learning algorithm is
one which (with probability at least $1-\delta$) outputs a function $h \in \cH$
such that $\dist_\cD(f,h) < \epsilon$.

\begin{theorem}[\cite{BHMZ20}, Theorem 15]
\label{thm:bhmz}
Let $\cH$ be any class of functions with a stable sample-compression scheme $(\COMP, \DECOMP)$ of
size $\ell$. Then the algorithm which, given labeled sample $S$, outputs $\DECOMP(\COMP(S))$, is a
proper learner for $\cH$ with sample complexity $O\left(\frac{\ell +
\log(1/\delta)}{\epsilon}\right)$.
\end{theorem}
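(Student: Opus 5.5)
Since \cref{thm:bhmz} is quoted from \cite{BHMZ20}, the plan is to reconstruct its argument rather than rely on it as a black box. Let $A(S) \define \DECOMP(\COMP(S))$. It is proper and consistent with $S$ by items 1--2 of \cref{def:hehe2}. The crude Littlestone--Warmuth union bound over all subsets of size at most $\ell$ only yields $O\big(\tfrac{\ell\log(1/\epsilon)+\log(1/\delta)}{\epsilon}\big)$. The whole point is to use stability (item 3) to remove this logarithm, and I would organise the proof around a symmetrization (``ghost sample'') argument.

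\textbf{Step 1 (key structural consequence of stability).} Take a training sample $S$ and an arbitrary labeled set $T$, all labeled by the target $h\in\cH$. If $\COMP(S\cup T)\subseteq S$, then item 3 applied to $\COMP(S\cup T)\subseteq S\subseteq S\cup T$ gives $\COMP(S)=\COMP(S\cup T)$. Hence $A(S)=A(S\cup T)$, which is correct on all of $T$. Taking $T=\{x\}$ gives the leave-one-out bound: over $m+1$ exchangeable points, the held-out point can be misclassified only if it lies in the compression set of the full sample. So $\Ex{\dist_\cD(A(\bm S),h)}\le \ell/(m+1)$.

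\textbf{Step 2 (symmetrization).} Draw $2m$ points, split them uniformly at random into a training half $\bm S$ and a test half $\bm T$, and let $\bm E\subseteq\bm T$ be the test points on which $A(\bm S)$ errs. By a standard Chernoff argument, if $\dist_\cD(A(\bm S),h)>\epsilon$ and $m\ge C/\epsilon$, then $|\bm E|\ge \epsilon m/2$ with probability at least $1/2$. So it suffices to bound
\[
\Pr{|\bm E|\ge \epsilon m/2}
\]
over the random split, conditioned on the multiset of $2m$ points.

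\textbf{Step 3 (the main obstacle).} The goal is to show $\Pr{|\bm E|\ge k}\le e^{-\Omega(k)}$ once $k\gtrsim\ell$. Step 1 shows that every nonempty $E'\subseteq\bm E$ satisfies $\COMP(\bm S\cup E')\not\subseteq\bm S$. My first attempt would be a peeling argument. Start from $\bm S\cup\bm E$; its compression set of size at most $\ell$ must contain an erring test point. Remove that point and repeat. This produces a chain of at most $\ell$-element sets that together ``witness'' every error, and it constrains which of the $2m$ points can fall on the test side.

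Then I would count: for a fixed $2m$-multiset, the number of such witness structures is at most $\binom{2m}{\le \ell}$-type, but each additionally forces a set of roughly $k$ specified points to land in the test half. I expect that handling the coupling between the chosen split and the compression sets is the delicate part. If this direct peeling does not go through, the fallback is the route I believe \cite{BHMZ20} takes: combine the in-expectation bound of Step 1 with a median/boosting-of-confidence style argument exploiting that $A$ is determined by an $\ell$-subset. That upgrades the constant-probability bound to $1-\delta$ while paying only an additive $\log(1/\delta)/\epsilon$.

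\textbf{Step 4 (conclusion).} Either route gives
\[
\Pr{\dist_\cD(A(\bm S),h)>\epsilon}\le \delta
\quad\text{for}\quad m=O\Big(\tfrac{\ell+\log(1/\delta)}{\epsilon}\Big).
\]
Since $A(\bm S)\in\cH$ by item 2, this is a proper learner with the claimed sample complexity.
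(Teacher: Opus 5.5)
Your Steps 1 and 2 are correct, but they only give the in-expectation bound $\ell/(m+1)$ and a reduction to a tail bound over random splits. Via Markov, that expectation bound yields a $1/\delta$ rather than $\log(1/\delta)$ dependence. The tail bound $\Pr{|\bm E|\ge k}\le e^{-\Omega(k)}$ in Step 3 is the entire content of the theorem, and you assert it rather than prove it.

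The peeling sketch does not produce a usable count. The chain of erring test points is defined through $A(\bm S)$, hence through the split itself. So you cannot fix the chain in advance and multiply the probabilities that its points fall in $\bm T$. A union bound over candidate compression sets (or chains) among the $2m$ points costs a factor $\binom{2m}{\le\ell}$, which reintroduces exactly the $\ell\log(1/\epsilon)$ you set out to remove, and you do not say what replaces this count.

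The fallback does not rescue the argument either. The theorem is about the specific map $S\mapsto\mathsf{decomp}(\mathsf{comp}(S))$, whereas confidence boosting outputs a different learner. Moreover, boosting the constant-probability guarantee of Step 1 in the standard way costs $O(\ell\log(1/\delta)/\epsilon)$ samples, a multiplicative rather than additive $\log(1/\delta)$.

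The paper simply imports this result from \cite{BHMZ20}. The missing step has a short proof once you use stability in the direction opposite to Step 1: pass to a random \emph{sub}sample rather than adding ghost points.

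Given $\bm S$, keep each of the $m$ draws independently with probability $1/2$ to form $\bm S'$, and let $\bm R$ consist of the $N\sim\mathrm{Bin}(m,1/2)$ discarded draws. Since $|\mathsf{comp}(\bm S)|\le\ell$, we have $\Pr{\mathsf{comp}(\bm S)\subseteq\bm S'\mid\bm S}\ge 2^{-\ell}$. On this event, item 3 of \cref{def:hehe2} gives $\mathsf{comp}(\bm S')=\mathsf{comp}(\bm S)$. Hence $A(\bm S')=A(\bm S)$, which agrees with $h$ on $\bm S\supseteq\bm R$. Conditional on the coin flips, $\bm S'$ and $\bm R$ are independent i.i.d.\ samples, so
\begin{align*}
2^{-\ell}\,\Pr{\dist_\cD(A(\bm S),h)>\epsilon}
&\le \Pr{\dist_\cD(A(\bm S'),h)>\epsilon \text{ and $A(\bm S')$ agrees with $h$ on $\bm R$}}\\
&\le \Ex{(1-\epsilon)^{N}} = \Bigl(1-\frac{\epsilon}{2}\Bigr)^{m}.
\end{align*}
Therefore the failure probability is at most $2^{\ell}e^{-\epsilon m/2}\le\delta$ once $m\ge 2(\ell\ln 2+\ln(1/\delta))/\epsilon$, and no symmetrization is needed. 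The same thinning trick, applied to the training half, is also what would establish your Step 3 claim.
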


To prove \cref{thm:general-1-sided} we establish the following:

\begin{proposition}
\label{claim:compression}
Let $\cH$ be any set of functions $\cX \to \pmset$ which admits a stable sample-compression scheme
of size $\ell$, and let $f \colon \cX \to \pmset$ be any function with minimal certificate
size $c$. Then the class $\cH \cup \{f\}$ admits a stable sample-compression scheme $(\COMP',
\DECOMP')$ of size $\max\{\ell,c\}$, with the property that $
    \DECOMP'(\COMP'(S)) = f
$
if and only if $S$ contains a certificate of non-membership for $f$.
\end{proposition}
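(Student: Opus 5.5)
Let $(\COMP,\DECOMP)$ be the given stable scheme for $\cH$. I would build $(\COMP',\DECOMP')$ by case analysis on whether the input sample $S$ (labeled by some function in $\cH\cup\{f\}$) contains a certificate of non-membership for $f$. Fix, once and for all, a canonical rule $\Phi$ that picks a finite minimal certificate from any sample containing one. For concreteness, fix a well-ordering of all finite subsets of $\cX$ and let $\Phi(S)$ be the least finite minimal certificate $C$ with $f(C)\subseteq S$. Then define:
\begin{itemize}
\item If $S$ contains a finite certificate for $f$ with $f$'s labels, set $\COMP'(S)=f(\Phi(S))$.
\item Otherwise set $\COMP'(S)=\COMP(S)$.
\end{itemize}
For $\DECOMP'(T)$: if the points of $T$ form a certificate for $f$ and $T$ carries $f$'s labels, output $f$; otherwise output $\DECOMP(T)$.

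Sizes and consistency come next. A certificate for $f$ is not realizable by any $h\in\cH$, so if $S$ contains a (labeled) certificate, $S$ cannot be labeled by $h\in\cH$; hence $S$ is labeled by $f$ and outputting $f$ is consistent. The output has size $|\Phi(S)|\le c$. If $S$ contains a certificate, it is finite since $S$ is, and any finite certificate contains a finite minimal one, so $\Phi$ is well defined. If $S$ contains no certificate, then $S$ is realizable by some $h\in\cH$, because the finite set $S$ is not itself a certificate. So $\COMP(S)$ is defined with size at most $\ell$. Moreover $\COMP(S)\subseteq S$ also contains no certificate, so $\DECOMP'$ falls into its second branch, and $\DECOMP(\COMP(S))$ is consistent with $S$ by property~2 of \cref{def:hehe2}. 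This gives size $\max\{\ell,c\}$ and the claimed ``iff'': $\DECOMP'(\COMP'(S))=f$ exactly when $S$ contains a certificate. Here I should check that $\DECOMP(\cdot)$ never equals $f$ in the second branch; that holds since it outputs an element of $\cH$ and $f\notin\cH$.

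Stability (property~3) is the main obstacle. Take $\COMP'(S)\subseteq S'\subseteq S$.
\begin{itemize}
\item In the first case, $S'$ contains $\Phi(S)$, so it still contains a certificate. I need $\Phi(S')=\Phi(S)$, and this is where the canonical choice matters. Every finite minimal certificate inside $S'$ is also inside $S$, and $\Phi(S)$ is the least such in $S$ and lies in $S'$, so it is also least in $S'$.
\item In the second case, $S'\subseteq S$ contains no certificate, and stability of $\COMP$ gives $\COMP(S')=\COMP(S)$.
\end{itemize}

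The subtle point is whether the ``least'' rule is legitimate for infinite $\cX$. A well-ordering (axiom of choice) suffices, or for practical domains an explicit order. I would also double-check the edge case in which $f$'s labels on $S$ coincide with some $h$ on a subsample, which the case split already handles.
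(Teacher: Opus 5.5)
Your proof is correct and follows essentially the same construction as the paper. Your case split on whether $S$ contains an $f$-labeled certificate is equivalent, for finite $S$ labeled by a function in $\cH\cup\{f\}$, to the paper's split on whether $S$ is consistent with some $h\in\cH$, both proofs use a canonical least minimal certificate to obtain stability, and your well-ordering is more than needed: a finite $S$ contains only finitely many minimal certificates, so any total order on them suffices, which is what the paper uses.
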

\begin{proof}
Let $(\COMP, \DECOMP)$ be the stable sample compression scheme for $\cH$.  For any fixed $f \notin
\cH$, we design the following stable sample-compression scheme $(\COMP', \DECOMP')$ for $\cH \cup \{f\}$. Let $\cC$ be
the set of finite minimal certificates for $f$. We choose an arbitrary total order on $\cC$ (this can be done without the axiom of choice whenever $\cX$ itself has a total order; since the minimal certificates are finite subsets of $\cX$, they inherit a lexicographic ordering from $\cX$). 
The compression algorithm $\COMP$ is then defined as follows.  Given a set of labeled samples $S = \{(x_1, g(x_1)),
\dotsc, (x_m, g(x_m))\}$:
\begin{flushleft}\begin{enumerate}
    \item If there exists $h \in \cH$ which agrees with the labels $g(x_i)$ on all $x_i$,
        output $\COMP(S)$ (the original sample compression for $\cH$);
    \item Otherwise, if the labels do not agree with any $h \in \cH$, they must agree with $f$, so
        there must exist a (finite) minimal certificate $C$ for $f$ within the sample. Let $C'$ be the
        minimal certificate which is also minimal in the total ordering on $\cC$ among all certificates contained in the sample, and output
        $\COMP'(S) = C'$.
\end{enumerate}\end{flushleft}
The decompression algorithm $\DECOMP$ is as follows: Given a set of labeled samples $T = \{(x_1, g(x_1)),$ $ \dotsc,
(x_t, g(x_t))\}$ (with $t \leq \max(\ell, c)$):
\begin{enumerate}
\item If there exists $h \in \cH$ such that $h(x_i) = g(x_i)$ for all $x_i$, output $\DECOMP(T)$.
\item Otherwise, if the labels do not agree with any $h \in \cH$, they must agree with $f$, so
output $f$.
\end{enumerate}
We must now show correctness of this compression scheme. It is clear that $|\COMP'(S)| \leq
\max\{\ell,c\}$ so the first item of \Cref{def:hehe2} follows.
For the second and third items, 
let $S$ be any set of labeled samples. We first consider the case that $S$ is consistent with some $h \in \cH$. In
this case, we have $\COMP'(S) = \COMP(S)$ and
\[
    \DECOMP'(\COMP'(S)) = \DECOMP'(\COMP(S)) = \DECOMP(\COMP(S)) ,
\]
so the second and third items of \Cref{def:hehe2} follow from 
those of the original scheme $\COMP$.

Now suppose $S$ is inconsistent with all $h \in \cH$, and therefore consistent with $f$. Then
$\COMP'(S)$ is a minimal certificate $C$ for $f$ which is also the smallest according to the total
order on $\cC$, and we have  $\DECOMP'(\COMP'(S)) = \DECOMP'(C) = f$, from which the second item follows. On the other hand, for any $S'$
which satisfies $\COMP'(S) = C \subseteq S' \subseteq S$, $C$ must again be the smallest minimal
certificate in $S'$ according to the total order on $\cC$, so $\COMP'(S') = \COMP'(S)$, as desired.
\end{proof}

With the proper learning theorem of \cite{BHMZ20} and the above claim, we can complete the proof of
\cref{thm:general-1-sided}. Let $\cH$ be any class of functions which admits a stable
sample-compression scheme of size $\ell$, and for which every finite minimal certificate for any $f
\notin \cH$ has size at most $c$. Let $m = \Theta\left(\frac{\ell + c +
\log(1/\delta)}{\epsilon}\right)$, $\cD$ be any distribution over $\cX$, and $f
\colon \cX \to \pmset$ be any function that is $\epsilon$-far from $\cH$ under $\cD$. We must show
that, with probability at least $1-\delta$, a random set of labeled samples $\bm S = \{(\bm x_1,
f(\bm x_1)), \dotsc, (\bm x_m, f(\bm x_m))\}$ contains a certificate of non-membership for $f$.

By \cref{claim:compression}, there exists a stable sample-compression scheme $(\COMP,
\DECOMP)$ for $\cH \cup \{f\}$ with size $\max\{\ell,c\}\le \ell + c$, such that $\DECOMP(\COMP(S)) = f$ if
and only if $S$ contains a certificate of non-membership for $f$. By \cref{thm:bhmz}, the algorithm
$\DECOMP(\COMP(S))$ is a proper learner for $\cH \cup \{f\}$.

Let $\bm S$ be the random samples labeled by $f$. Then with probability at least $1-\delta$,
we have the event
\begin{equation}
\label{eq:distance}
    \dist_\cD\left(f, \DECOMP\big(\COMP(\bm S)\big)\right) < \epsilon ,
\end{equation}
since $\DECOMP(\COMP(\cdot))$ is a proper learning algorithm. Since $f$ is $\epsilon$-far from $\cH$
under $\cD$ and the learning algorithm is proper, it must be the case that, if event
\eqref{eq:distance} occurs, then $\DECOMP(\COMP(\bm S)) = f$.
By
\cref{claim:compression},
this occurs if and only if $\bm S$ contains a certificate of non-membership for $f$.

\subsection*{Disclosure of AI Use}

The lower bound proof in \cref{section:main-lb} for domain $\pmset^n$ was AI assisted; see
\cref{remark:ai} for details.  All text in this paper was handcrafted with love and care by humans,
by the ancient traditional method of pushing buttons on keyboards and then arguing about it.

\bibliographystyle{alpha}
\bibliography{references}

\providecommand{\FOCS}{Proceedings of the IEEE Symposium on Foundations of
  Computer Science (FOCS)} \providecommand{\SODA}{Proceedings of the ACM-SIAM
  Symposium on Discrete Algorithms (SODA)} \providecommand{\STOC}{Proceedings
  of the ACM SIGACT Symposium on Theory of Computing (STOC)}
  \providecommand{\ITCS}{Proceedings of the Innovations in Theoretical Computer
  Science Conference (ITCS)} \providecommand{\ICALP}{Proceedings of the
  International Colloquium on Automata, Languages, and Programming (ICALP)}
  \providecommand{\ICML}{Proceedings of the International Conference on Machine
  Learning (ICML)} \providecommand{\COLT}{Proceedings of the Conference on
  Learning Theory (COLT)} \providecommand{\AISTATS}{Proceedings of the
  International Conference on Artificial Intelligence and Statistics (AISTATS)}
  \providecommand{\TOCT}{ACM Transactions on Computation Theory (TOCT)}
  \providecommand{\RANDOM}{Approximation, Randomization, and Combinatorial
  Optimization. Algorithms and Techniques (APPROX/RANDOM)}
  \providecommand{\JACM}{Journal of the ACM (JACM)}
  \providecommand{\SIAMJOC}{SIAM Journal on Computing}
  \providecommand{\TOC}{Theory of Computing} \providecommand{\TOIT}{IEEE
  Transactions on Information Theory} \providecommand{\COLT}{Proceedings of the
  Conference on Learning Theory (COLT)} \providecommand{\ALT}{Proceedings of
  Algorithmic Learning Theory (ALT)} \providecommand{\NEURIPS}{Advances in
  Neural Information Processing Systems (NeurIPS)}
  \providecommand{\JMLR}{Journal of Machine Learning Research}
  \providecommand{\SOSA}{Symposium on Simplicity in Algorithms (SOSA)}
  \providecommand{\SICOMP}{{SIAM} Journal on Computing (SICOMP)}
\begin{thebibliography}{AAHLZ24}

\bibitem[AAHLZ24]{AHMLZ24}
Ishaq Aden-Ali, Mikael~M{\o}ller H{\o}gsgaard, Kasper~Green Larsen, and Nikita
  Zhivotovskiy.
\newblock Majority-of-three: The simplest optimal learner?
\newblock In {\em \COLT}, 2024.

\bibitem[Bal97]{ball1997elementary}
Keith Ball.
\newblock An elementary introduction to modern convex geometry.
\newblock {\em Flavors of geometry}, 31(1--58):26, 1997.

\bibitem[BBBY12]{BBBY12}
Maria{-}Florina Balcan, Eric Blais, Avrim Blum, and Liu Yang.
\newblock Active property testing.
\newblock In {\em \FOCS}, 2012.

\bibitem[BFH21]{BFH21}
Eric Blais, Renato {Ferreira Pinto Jr.}, and Nathaniel Harms.
\newblock {VC} dimension and distribution-free sample-based testing.
\newblock In {\em \STOC}, 2021.

\bibitem[BHK26]{BHK26}
Lorenzo Beretta, Nathaniel Harms, and Caleb Koch.
\newblock Feature selection and junta testing are statistically equivalent.
\newblock In {\em \SODA}, 2026.

\bibitem[BHMZ20]{BHMZ20}
Olivier Bousquet, Steve Hanneke, Shay Moran, and Nikita Zhivotovskiy.
\newblock Proper learning, {Helly} number, and an optimal {SVM} bound.
\newblock In {\em \COLT}, 2020.

\bibitem[Can22]{Can22}
Cl{\'e}ment Canonne.
\newblock {\em Topics and techniques in distribution testing}.
\newblock now Publishers, 2022.

\bibitem[CP22]{CP22}
Xi~Chen and Shyamal Patel.
\newblock Distribution-free testing for halfspaces (almost) requires {PAC}
  learning.
\newblock In {\em \SODA}, 2022.

\bibitem[ES20]{ES20}
Rogers Epstein and Sandeep Silwal.
\newblock Property testing of lp-type problems.
\newblock In {\em \ICALP}, 2020.

\bibitem[FH26]{FH26}
Renato {Ferreira Pinto Jr.} and Nathaniel Harms.
\newblock Testing support size more efficiently than learning histograms.
\newblock {\em TheoretiCS}, 5:10, 2026.

\bibitem[GGR98]{GGR98}
Oded Goldreich, Shafi Goldwasser, and Dana Ron.
\newblock Property testing and its connection to learning and approximation.
\newblock {\em \JACM}, 45(4):653--750, 1998.

\bibitem[GR16]{GR16}
Oded Goldreich and Dana Ron.
\newblock On sample-based testers.
\newblock {\em \TOCT}, 8(2):7:1--7:54, 2016.

\bibitem[Han16]{Han16}
Steve Hanneke.
\newblock The optimal sample complexity of {PAC} learning.
\newblock {\em Journal of Machine Learning Research}, 17(38):1--15, 2016.

\bibitem[Han19]{Han19}
Steve Hanneke.
\newblock {CSTheory} stackexchange answer.
\newblock
  \url{https://cstheory.stackexchange.com/questions/40161/proper-pac-learning-vc-dimension-bounds},
  2019.
\newblock Accessed 2026-07-28.

\bibitem[Har19]{Har19}
Nathaniel Harms.
\newblock Testing halfspaces over rotation-invariant distributions.
\newblock In {\em \SODA}, 2019.

\bibitem[LL20]{LL20}
Philip~M Long and Raphael~J Long.
\newblock On the complexity of proper distribution-free learning of linear
  classifiers.
\newblock In {\em \ALT}, 2020.

\bibitem[LW86]{LW86}
Nick Littlestone and Manfred Warmuth.
\newblock Relating data compression and learnability.
\newblock Unpublished manuscript, 1986.

\bibitem[MORS10]{MORS10}
Kevin Matulef, Ryan O'Donnell, Ronitt Rubinfeld, and Rocco~A. Servedio.
\newblock Testing halfspaces.
\newblock {\em \SICOMP}, 39(5):2004--2047, 2010.

\bibitem[RRSS09]{RRSS09}
Sofya Raskhodnikova, Dana Ron, Amir Shpilka, and Adam Smith.
\newblock Strong lower bounds for approximating distribution support size and
  the distinct elements problem.
\newblock {\em \SICOMP}, 39(3):813--842, 2009.

\bibitem[RS50]{RS50}
Hans Rademacher and IJ~Schoenberg.
\newblock Helly's theorems on convex domains and {Tchebycheff}'s approximation
  problem.
\newblock {\em Canadian Journal of Mathematics}, 2:245--256, 1950.

\bibitem[VC71]{VC71}
V.~N. Vapnik and A.~Ya. Chervonenkis.
\newblock On the uniform convergence of relative frequencies of events to their
  probabilities.
\newblock {\em Theory of Probability \& Its Applications}, 16(2):264--280,
  1971.

\bibitem[VC74]{VC74}
V.~N. Vapnik and A.~Ya. Chervonenkis.
\newblock {\em Theory of pattern recognition}.
\newblock Nauka, Moscow, 1974.

\bibitem[Ver26]{Ver26}
Roman Vershynin.
\newblock {\em High-Dimensional Probability: An Introduction with Applications
  in Data Science}.
\newblock Cambridge Series in Statistical and Probabilistic Mathematics.
  Cambridge University Press, 2 edition, 2026.

\bibitem[VV11]{VV11}
Gregory Valiant and Paul Valiant.
\newblock Estimating the unseen: an $n/log(n)$-sample estimator for entropy and
  support size, shown optimal via new {CLTs}.
\newblock In {\em \STOC}, 2011.

\bibitem[WY19]{WY19}
Yihong Wu and Pengkun Yang.
\newblock Chebyshev polynomials, moment matching, and optimal estimation of the
  unseen.
\newblock {\em The Annals of Statistics}, 47(2):857--883, 2019.

\end{thebibliography}

\appendix

\section{Alternate Lower Bound Proof}
\label{section:alternate}

We give an alternate, more intuitive proof of the main lower bound, but with domain $\bR^n$ instead
of $\pmset^n$ (so the result is weaker).

\begin{boxtheorem}
\label{thm:main-reals}
Distribution-free sample-based halfspace testing over $\bR^n$ requires $\Omega \left ( \frac{n + \log(1/\delta)}{\epsilon} \right)$ samples.
\end{boxtheorem}

\subsection{Reduction from Testing Halfspace Truncation}

We transform the halfspace testing problem into a distribution testing
problem, which lets us ignore the labels of the points. For any (absolutely continuous) probability
distribution $\cD$ over $\bR^n$ with probability density $\mu$, let $\supp(\cD) = \{
x \colon \mu(x) > 0 \}$ denote its support.

\begin{definition}[Halfspace Truncation Testing Problem]
Given samples from an unknown probability distribution $\cD$ over $\bR^n$, the output of the
algorithm should satisfy the following with probability at least $3/4$:
\begin{itemize}
\item If $\exists w \in \bR^n$ such that
$\supp(\cD) \subseteq \{ x \colon \inn{w,x} \geq 0 \}$, the algorithm outputs $\Accept$;
\item If $\forall w \in \bR^n$, $\cD$ has total probability mass at least
$\epsilon$ on the set $\{ x \colon \inn{w,x} \leq 0 \}$, the algorithm outputs $\Reject$.
\end{itemize}
\end{definition}

We will prove the following theorem.
\begin{boxtheorem}
\label{thm:halfspace-truncation}
Testing halfspace truncation in $\bR^n$ requires $\Omega(n/\epsilon)$ samples.
\end{boxtheorem}

This implies \cref{thm:main-reals} in the case where $\delta = \frac{1}{3}$ via the following reduction (modulo some technicalities, see
\cref{rem:redux}).

\begin{proposition}[Reduction]
\label{prop:truncation-to-halfspace}
If there is a distribution-free sample-based halfspace tester in $\bR^n$ with sample complexity $m
= m(n,\epsilon)$, then there is a tester for halfspace truncation with sample complexity
$m(n,\epsilon/2)$.
\end{proposition}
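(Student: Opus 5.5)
Given a halfspace tester $T$ with sample complexity $m(n,\epsilon)$, I will build a truncation tester by attaching labels to unlabeled samples and feeding the labeled samples to $T$. Given a sample $\bm x \sim \cD$ from the unknown distribution, I flip a fair coin $\bm s \in \pmset$ and output the labeled point $(\bm s\bm x, \bm s)$. Equivalently, I feed $T$ samples from the symmetrized distribution $\cD'$ on $\bR^n$, which is the law of $\bm s\bm x$, labeled by $f(y) \define$ the sign $s$ that produced $y$. Since $\cD$ is absolutely continuous, $\bm x$ and $-\bm x'$ coincide with probability zero, so $f$ is well defined $\cD'$-almost everywhere. To avoid measure-zero issues, I can define $f$ as $+1$ on $\supp(\cD)$ and $-1$ on $-\supp(\cD)$, and fix an arbitrary convention on their intersection. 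The tester runs $T$ with proximity parameter $\epsilon/2$ and answers as $T$ does. It uses $m(n,\epsilon/2)$ samples of $\cD$, one per labeled sample.

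For completeness, suppose $\supp(\cD) \subseteq \{x : \inn{w,x} \ge 0\}$. I need $f$ to agree $\cD'$-a.e.\ with a halfspace. Take $h(y) = \sign(\inn{w,y})$ with the convention $\sign(0)=+1$. Then $h(\bm x)=+1$ for $\bm x\in\supp(\cD)$. Also $h(-\bm x)=-1$ unless $\inn{w,\bm x}=0$. The hyperplane $\{\inn{w,x}=0\}$ has Lebesgue measure zero, hence $\cD$-measure zero when $w\neq 0$. If $w=0$, the hypothesis holds trivially for every $\cD$ and cannot coexist with the reject condition, so I take $w\neq0$ without loss. Then $f = h$ almost surely under $\cD'$, and $T$ accepts with probability at least $3/4$. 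This is where the remark about technicalities applies: "is a halfspace" must be read up to a $\cD'$-null set, or $f$ must be redefined on a null set, which does not change the sample distribution.

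For soundness, suppose every $w$ leaves mass at least $\epsilon$ of $\cD$ on $\{\inn{w,x}\le 0\}$, and let $h(y)=\sign(\inn{w,y}-t)$ be any halfspace. The main step is to show $\dist_{\cD'}(f,h) \ge \epsilon/2$. The error is
\[\tfrac12\Pr{\inn{w,\bm x} - t \le 0} + \tfrac12 \Pr{\inn{w,-\bm x}-t > 0},\]
with $\bm x\sim\cD$; boundary conventions affect only null sets. I split on the sign of $t$. If $t \ge 0$, the first term is at least $\tfrac12\Pr{\inn{w,\bm x}\le 0}\ge \epsilon/2$. If $t<0$, the second term is $\tfrac12\Pr{\inn{w,\bm x} < -t}\ge \tfrac12\Pr{\inn{w,\bm x}\le 0}\ge\epsilon/2$. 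Either way the error is at least $\epsilon/2$. So $f$ is $\epsilon/2$-far from every halfspace, up to the strict versus non-strict inequality in \cref{def:halfspace-testing}. I would absorb that slack by running $T$ at parameter slightly below $\epsilon/2$, or by defining the truncation problem with the matching constant. In that case $T$ rejects with probability at least $3/4$.

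I expect the main obstacle to be the edge cases rather than the core inequality: the hyperplane boundaries, the definition of $f$ on overlapping supports, $w=0$, and the strict inequality in the distance condition. All of these are handled by absolute continuity and a small constant adjustment.
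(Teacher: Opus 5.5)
Your reduction $(\bm s\bm x,\bm s)$, your completeness argument and your soundness estimate coincide with the paper's. Your split on the sign of $t$ is a correct, somewhat cleaner way to lower-bound the error than the paper's decomposition over $\{|\inn{w,y}|\le t\}$, which tacitly takes $t\ge 0$.

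The gap is the step where you say the input to $T$ is ``equivalently'' $\cD'$ labeled by a fixed function $f$. The label $\bm s$ is determined by the point $\bm y=\bm s\bm x$ only when $\supp(\cD)\cap(-\supp(\cD))$ is a null set. This holds in the accept case, where the intersection lies in $\{\inn{w,x}=0\}$, but it fails in the reject case. Take $\cD=\cN(0,I)$, which is exactly the far instance the lower bound uses:
\begin{itemize}
\item $\bm y$ is independent of $\bm s$, so every sample carries a fair coin as its label, and no fixed $f$ produces this input.
\item Your concrete choice of $f$ ($+1$ on $\supp(\cD)=\bR^n$) is a constant function, hence itself a halfspace.
\end{itemize}
The almost-sure absence of antipodal collisions among the $m$ samples only says that each realized sample is consistent with \emph{some} function depending on the sample. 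It does not make the label a function of the point. So the quantity you bound, $\tfrac12\bP[\inn{w,\bm x}\le t]+\tfrac12\bP[\inn{w,-\bm x}>t]$, is the error of $h$ against random labels, not $\dist_{\cD'}(f,h)$. \cref{def:halfspace-testing} gives $T$ no soundness obligation on such an input.

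The missing ingredient is a derandomization of the labels. The paper's proof also argues with random labels and explicitly defers this point to \cref{rem:redux}. The fix goes as follows.
\begin{itemize}
\item Partition $\bR^n$ into cells of tiny $\cD'$-measure, and label each cell by an independent random sign. For general $\cD$, give the sign the cell's average conditional bias.
\item With high probability the $m$ samples fall into distinct cells, so the random-label input is statistically close to an input labeled by a random but fixed function $\bm f$.
\item Show that $\bm f$ is, with high probability, nearly $\epsilon/2$-far from every halfspace. This is a uniform-convergence argument over halfspaces.
\end{itemize}
After that, your soundness computation, run at a slightly smaller proximity parameter, completes the proof.
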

\begin{proof}
We perform a reduction as follows. Given a sample $\bm x \sim \cD$ for the halfspace truncation
problem, provide the sample $(\bm b \bm x, \bm b)$ to the halfspace tester, where $\bm b \sim
\pmset$ is an independent uniform random sign. Let $\cD'$ be the distribution of $\bm b \bm x$
defined in this way.  To prove correctness of the reduction, first assume $\cD$ is supported on a
halfspace through the origin, witnessed by normal vector $w \in \bR^n$. Then each sample $(\bm b \bm
x, \bm b)$ given to the halfspace tester is distributed as $( \bm b \bm x, \sign\inn{\bm b \bm x,
w}) \equiv (\bm x', \sign \inn{\bm x', w})$ where $\bm x' \sim \cD'$. So the tester should output
$\Accept$.

We will now argue by contrapositive that the distribution-free halfspace tester must reject if $\cD$ is far from being supported on a halfspace. Indeed, suppose that there is a halfspace $h(x) = \sign(\inn{w,x}-t)$ such that $\Pr{ h(\bm b \bm
x) \neq \bm b } < \epsilon/2$ (\ie we are in a case where the halfspace tester is allowed to output
$\Accept$; but see \cref{rem:redux}). Then, since $\cD'$ is symmetric (\ie $y$ and $-y$ have the same
probability density) and $h(x)$ is constant on $|\inn{w,x}| \leq t$,
\begin{align*}
    \frac{\epsilon}{2}
    > \Pr{ h(\bm b \bm x) \neq \bm b }
    &= \frac{1}{2} \Pr{ |\inn{w,\bm b\bm x}| \leq t }
        + \Pr{ |\inn{w, \bm b \bm x}| > t \;\wedge\; h(\bm b \bm x) \neq \bm b} \\
    &\geq \frac{1}{2} \Pr{ \bm b \neq \sign\inn{w,\bm b \bm x} \wedge |\inn{w,\bm b\bm x}| \leq t}
        + \frac{1}{2} \Pr{ |\inn{w, \bm b \bm x}| > t \;\wedge\; \sign\inn{w,\bm b\bm x} \neq \bm b} \\
    &= \frac{1}{2} \Pr{ \sign\inn{w, \bm b \bm x} \neq \bm b }
    = \frac{1}{2} \Pr{ \sign\inn{w, \bm x} \neq 1 } ,
\end{align*}
so $\cD$ is $\epsilon$-close to supported on a halfspace through the origin.
\end{proof}

\begin{remark}[Technicalities redux]
    \label{rem:redux}
    As in the earlier lower bound for $\pmset^n$, we have technicalities to take
    care of. In the NO case, our lower bound construction again assigns random
    labels to points $x \in \bR^n$, meaning that labels are not assigned by a
    fixed function $f$. In the continuous domain we can partition $\bR^n$ into
    cells of arbitrarily small measure such that (with arbitrarily high
    probability) no two samples will appear within the same cell; then we can
    assign random labels to each cell, and the result defines a fixed function
    $f$.

    Likewise, the dependence on $\epsilon$ may be reintroduced by putting $1-\epsilon$ mass on a
``dummy point'' (or small ball), so it suffices to show an $\Omega(n)$
    lower bound for constant $\epsilon$. The case of general $\delta$ will follow by combining with the $\Omega \left(\frac{1}{\epsilon} \log(1/\delta) \right)$ lower bound in \Cref{section:technicalities}.
\end{remark}

\subsection{Lower-Bound Construction}

\newcommand{\Slab}{\mathsf{Slab}}
\newcommand{\SLAB}{\mathsf{SLAB}}

To prove our lower bound on testing halfspace truncation, we define the following class of
distributions. For any $n$-dimensional unit vector $w \in \bS^{n-1}$, define
\[
    \Slab_w \define \{ x \in \bR^n \colon 0 < \inn{w,x} < t \}
        ,\qquad\text{ where we choose } t \define \frac{1}{n^{1/4}} .
\]
Define the distribution $\cS_w$ as the distribution of $\bm z \sim \cN(0, I)$ conditional on
$z \in \Slab_{w}$. Let $\SLAB \define \{ \cS_w \colon w \in \bS^{n-1} \}$ be the set of all such
slab distributions. Our goal is to show that $m$ samples from a randomly chosen $\cS_{\bm w}$ are
indistinguishable from $m$ samples from $\cN(0,I)$:

\begin{lemma}[Indistinguishability]
\label{lemma:continuous-slab-indistinguishability}
Any algorithm given samples from an unknown distribution $\cD$, which distinguishes (with
probability at least $3/4$) between the cases $\cD = \cN(0, I)$ and $\cD \in \SLAB$
must use at least $\Omega(n)$ samples.
\end{lemma}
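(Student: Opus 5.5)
Our plan is to prove \cref{lemma:continuous-slab-indistinguishability} by bounding the total variation distance between $m$ i.i.d.\ samples from $\cN(0,I)$ and $m$ i.i.d.\ samples from $\cS_{\bm w}$ with $\bm w$ uniform on $\bS^{n-1}$, using the second-moment (chi-square) method. Write $p = \Pr{0 < \bm g < t}$ for $\bm g \sim \cN(0,1)$; since $t = n^{-1/4}$, we have $p \approx t/\sqrt{2\pi}$. The density ratio of $\cS_w$ to $\cN(0,I)$ is $\ind{x \in \Slab_w}/p$. The mixture $\bE_{\bm w}[\cS_{\bm w}^{\otimes m}]$ then has chi-square divergence from $\cN(0,I)^{\otimes m}$ equal to
\[
\bE_{\bm w, \bm w'}\left[\left(\frac{\Pr{\bm z \in \Slab_{\bm w}\cap \Slab_{\bm w'}}}{p^2}\right)^{m}\right] - 1 ,
\]
where $\bm w, \bm w'$ are independent uniform unit vectors. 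The task is therefore to show this quantity is $o(1)$ (or at least small) when $m \le c n$.

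Next we compute the inner probability. For unit vectors $w,w'$ with $\inn{w,w'} = \rho$, the pair $(\inn{w,\bm z},\inn{w',\bm z})$ is a standard bivariate Gaussian with correlation $\rho$. Let $Q(\rho) = \Pr{0<\bm g_1<t,\ 0<\bm g_2<t}$. Then $Q(0) = p^2$. Expanding the bivariate density in $\rho$ (for example via the Mehler/Hermite expansion), we get
\[
Q(\rho)/p^2 = 1 + \rho\, \alpha_1^2 + \rho^2 \alpha_2^2 + \cdots ,
\]
where $\alpha_k = \bE[H_k(\bm g)\mid 0<\bm g<t]$ are normalized Hermite coefficients. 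Because the slab is thin, $\bm g$ conditioned on $(0,t)$ is roughly uniform on $(0,t)$, so $\alpha_1 \approx t/2$, while $\alpha_2 \approx -1/\sqrt2$ up to $O(t^2)$. This gives
\[
Q(\rho)/p^2 \le 1 + O(t^2)|\rho| + O(\rho^2) \le \exp\bigl(O(t^2|\rho| + \rho^2)\bigr)
\]
for $|\rho|$ small. For $|\rho|$ near $1$ we fall back on the crude bound $Q(\rho)/p^2 \le 1/p = O(n^{1/4})$.

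Finally we average over $\rho = \inn{\bm w,\bm w'}$, whose density is proportional to $(1-\rho^2)^{(n-3)/2}$, so that $\rho$ is approximately $\cN(0,1/n)$ and subgaussian with variance proxy $1/n$. In the bulk region $|\rho| \le 1/2$, the bound above gives $\bE[\exp(O(m t^2|\rho| + m\rho^2))]$. With $t^2 = n^{-1/2}$ and $m = cn$, the term $m t^2 |\rho|$ is $O(c\sqrt n\,|\rho|) = O(c)$ typically, and $m\rho^2 = O(c)$ typically. Both terms have subgaussian/subexponential tails, so for small $c$ this expectation is $1 + O(c)$. In the tail region $|\rho| > 1/2$, the crude bound gives a contribution of at most $e^{-\Omega(n)} \cdot n^{m/4}$. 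This is where I expect the main obstacle. The factor $n^{m/4} = e^{\Theta(n\log n)}$ overwhelms $e^{-\Omega(n)}$, so the crude bound fails. To fix this, we would either obtain the sharper bound $Q(\rho)/p^2 \le \exp(O(\rho^2))$ uniformly for $|\rho| \le 1-\eta$, using the explicit bivariate Gaussian density ratio $(1-\rho^2)^{-1/2}\exp(\cdots)$ restricted to the small box $(0,t)^2$. Or, failing that, we would replace chi-square by a conditional-TV argument: truncate to the high-probability event that the $m$ samples' projections behave typically, and discard the near-aligned pairs. The chi-square being $O(c)$ yields TV $\le \sqrt{O(c)} < 1/4$. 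The tester would then succeed with probability at most $1/2 + \TV/2 < 3/4$, which gives the $\Omega(n)$ bound.
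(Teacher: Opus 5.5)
There is a genuine gap: the region $|\rho|>1/2$, which you correctly identify as the main obstacle, is never actually bounded.

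Your chi-square reduction is the same as the paper's. Your Mehler/Hermite treatment of the bulk $|\rho|\le 1/2$ is a reasonable substitute for the paper's trigonometric area comparison. Both give a bound of the form $1+O(t^2|\rho|)+O(\rho^2)$, where the key point is that the error term is $t^2|\rho|$ rather than $t^2$. You need this on all of $|\rho|\le 1/2$, not only for small $|\rho|$. That requires uniform control of the coefficients, e.g.\ $|\alpha_k|\le K e^{t^2/4}$ for an absolute constant $K$ by Cram\'er's inequality, so that $\sum_{k\ge 3}|\rho|^k\alpha_k^2=O(|\rho|^3)$.

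Neither of your proposed fixes closes the gap at $|\rho|>1/2$:
\begin{itemize}
\item Fix (1) is a bound $\exp(O(\rho^2))$ uniform only on $|\rho|\le 1-\eta$. It leaves $(1-\eta,1]$ to the crude bound, where you again face $n^{m/4}=e^{\Theta(n\log n)}$ against $\Pr{|\bm\rho|>1-\eta}=e^{-\Theta_\eta(n)}$. This fails for every fixed $\eta$. Shrinking $\eta$ with $n$ only moves the problem, because the uniform constant grows like $\eta^{-1/2}$.
\item Fix (2) is not yet an argument. $\bm w,\bm w'$ are independent replicas inside the second moment, not objects you can discard, and you specify no truncation event.
\end{itemize}

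The missing idea is that near $|\rho|=1$ you should not truncate at all. Instead, integrate the blow-up against the density $c_n(1-\rho^2)^{(n-3)/2}$ of $\bm\rho$. Since $x^2-2\rho xy+y^2\ge 0$, the bivariate Gaussian density is at most $1/(2\pi\sqrt{1-\rho^2})$. Hence $Q(\rho)\le t^2/(2\pi\sqrt{1-\rho^2})$ for every $\rho\in(-1,1)$. Together with $p\ge t e^{-t^2/2}/\sqrt{2\pi}$, this gives $Q(\rho)/p^2\le e^{t^2}(1-\rho^2)^{-1/2}$, and therefore
\[
\int_{1/2<|\rho|<1}\Bigl(\frac{e^{t^2}}{\sqrt{1-\rho^2}}\Bigr)^{m} c_n(1-\rho^2)^{\frac{n-3}{2}}\,d\rho \le e^{mt^2}\,c_n\,(3/4)^{\frac{n-m-3}{2}} = e^{-\Omega(n)}
\]
for $m\le cn$, because $mt^2=c\sqrt n$ and $c_n=O(\sqrt n)$.

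This is exactly how the paper handles $\rho>1/\sqrt 2$. Its bound $\frac{\rho^2+2t^2\rho}{\sqrt{1-\rho^2}}$ has the same $(1-\rho^2)^{-1/2}$ growth, and the resulting factor $(1-\rho^2)^{-m/2}$ is absorbed into $(1-\rho^2)^{(n-3)/2}$ once $m<n/3$.

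This crude density bound cannot replace your bulk analysis. On $|\rho|\le 1/2$ the factor $e^{mt^2}=e^{c\sqrt n}$ would be fatal, so the two regimes genuinely need two different estimates.
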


The distribution $\cN(0,I)$ is $1/2$-far from being supported on a halfspace through the origin, so
\cref{lemma:continuous-slab-indistinguishability} combined with \cref{prop:truncation-to-halfspace}
and \cref{rem:redux} gives \cref{thm:main-reals}.
The remainder of this section will prove \cref{lemma:continuous-slab-indistinguishability}.

To prove \cref{lemma:continuous-slab-indistinguishability}, we use the following standard technique
in distribution testing (see \eg the book \cite{Can22}).  Let $p, q$ be the probability density
functions (PDFs) of any two probability distributions (each of which is absolutely continuous with respect to some
reference measure $\mu$). The $\chi^2$ divergence is defined as
\[
    \chi^2(p,q) \define \int_x \frac{(p(x)-q(x))^2}{q(x)} d\mu(x) .
\]
A standard inequality is:
\begin{equation}
\label{eq:tv-to-chi2}
\TV(p,q) \leq \frac{1}{2} \sqrt{\chi^2(p,q)} .
\end{equation}

Let $p_w \define \cS_w^{\otimes m}$ denote the distribution of $m$ samples chosen from $\cS_w$,
where $w$ is a unit vector, and let $p$ denote the distribution of $m$ samples drawn from $p_{\bm
w}$ where the unit vector $\bm w$ is chosen uniformly at random (\ie we draw $\bw$ uniformly at random
and then draw $m$ samples from $\cS_{\bw}$).  Let $q \define \cN(0,I)^{\otimes m}$ be the distribution
of $m$ samples chosen from $\cN(0, I)$. Then, using $X \in (\bR^n)^m$ to denote a sequence of $m$
sample vectors in $\bR^n$, a standard calculation (easy to verify) reveals
\begin{equation}
\label{eq:chi-square-formula}
\chi^2(p,q)
= \Exu{\bm X \sim q}{ \frac{p(\bm X)^2}{q(\bm X)^2} } - 1 .
\end{equation}
Therefore, to prove \cref{lemma:continuous-slab-indistinguishability}, it suffices to show $\Exu{\bm
X \sim q}{ \frac{p(\bm X)^2}{q(\bm X)^2}} = 1 + o(1)$ when $m = o(n)$.

\subsection{Converting to a 2-Dimensional Integral}
The main step of the proof is to convert the $\chi^2$ divergence into a 2-dimensional integral that we can approximate with basic trigonometry.
Since $p_w$ is $q$ conditional on all samples lying within $\Slab_{w}$,
\begin{equation}
\label{eq:p_w}
    \forall X \colon \qquad p_w(X)
        = q(X) \cdot \frac{\ind{X \subset \Slab_{w}}}
                {\Pru{\bm Y \sim q}{ \bm Y \subset \Slab_{w} }}
        = q(X) \cdot \frac{\ind{X \subset \Slab_{w}}} {S^m} ,
\end{equation}
where we define 
\[
    S \define \Pru{\bm g \sim \cN(0,1)}{0 < \bm g < t}.
\]
Therefore, for independent uniformly random unit vectors $\bm w_1, \bm w_2$, we combine
\cref{eq:p_w} with \cref{eq:chi-square-formula} to get
\begin{align}
1 + \chi^2(p,q)
&= \Exu{\bm X \sim q}{ \frac{p(\bm X)^2}{q(\bm X)^2} }
= \Exu{\bm X \sim q}
    { \frac{\Exu{\bm w_1, \bm w_2}{ p_{\bm w_1}(\bm X) p_{\bm w_2}(\bm X) }}{ q(\bm X)^2} } \nonumber \\
&= \frac{1}{S^{2m}} \cdot \Exu{\bm w_1, \bm w_2}{ \Pru{\bm X \sim q}{ \bm X \subseteq \Slab_{\bm
w_1} \cap \Slab_{\bm w_2} }} \nonumber \\
&= \frac{1}{S^{2m}} \cdot \Exu{\bm w_1, \bm w_2}{ \Pru{\bm y \sim \cN(0,I)}{ \bm y \in \Slab_{\bm
w_1} \cap \Slab_{\bm w_2} }^m} . \label{eq:apple}
\end{align}
By rotation-invariance, the expression inside the expectation depends only on the angle $\bm \rho$
between unit vectors $\bm w_1$ and $\bm w_2$. We may therefore assume $\bm w_1$ is the first
standard basis vector and replace the expectation with
\[
    \Exu{\bm \rho}{ P(\bm \rho)^m }
    \define \Exu{\bm \rho}{ \Pru{\bm y \sim \cN(0,I_2)}{ 0 < \bm y_1 < t \;\wedge\; 0 < \bm w^\top \bm y < t }^m } ,
\]
where $\bm \rho$ is the inner product of two uniformly random $n$-dimensional unit vectors, $\bm y
\sim \cN(0, I_2)$ is a 2-dimensional Gaussian and $\bm w = (\bm \rho, \sqrt{1-\bm \rho^2})$.
Observing that $P(|\rho|) \geq P(-|\rho|)$ and the distribution of $\bm \rho$ is symmetric around 0, we have
\begin{equation}
    \Exu{\bm \rho}{P(\bm \rho)^m} \leq 
    \Ex{ P(|\bm \rho|)^m } \label{eq:banana}
\end{equation}
and it suffices to consider the values $\rho > 0$. For fixed $\rho > 0$ and corresponding vector
$w =(\rho,\sqrt{1-\rho^2})$, we define the following sets in $\bR^2$ (see \cref{fig:trig} for an illustration of these
sets):
\begin{align*}
A(\rho)
&\define \left\{ y \in \bR^2 \colon 0 < y_1 < t \;\wedge\; y_2 < 0 \;\wedge\; 0 < w^\top y < t \right\}
    \\
B(\rho)
&\define \left\{ y \in \bR^2 \colon 0 < y_1 < t \;\wedge\; 0 < y_2 < t \;\wedge\; t \leq w^\top y \right\}
    \\
C(\rho)
&\define \left\{ y \in \bR^2 \colon 0 < y_1 < t \;\wedge\; t < y_2 \;\wedge\; 0 < w^\top y < t  \right\} .
\end{align*}
Let $\gamma_2(\cdot)$ denote the 2-dimensional Gaussian probability measure. Then, combining \Cref{eq:apple} and \Cref{eq:banana}, 
\begin{equation}
\label{eq:2d-gaussian-integral}
\begin{aligned}
    1 + \chi^2(p,q)
        &\leq 
        \Exuc{\bm \rho}{ \left(1 + \frac{1}{S^2}(\gamma_2(A(\bm \rho)) +
\gamma_2(C(\bm \rho)) - \gamma_2(B(\bm \rho))) \right)^m }{\bm \rho > 0} \\
        &= 2 \int_0^1 \left(1 + \frac{1}{S^2}(\gamma_2(A(\rho)) + \gamma_2(C(\rho)) - \gamma_2(B(\rho)))\right)^m
            d\sigma_n(\rho) ,
\end{aligned}
\end{equation}
where $\sigma_n$ is the PDF of the first coordinate of a uniformly random $n$-dimensional unit
vector.

\begin{figure}[h!]
\begin{center}
\includegraphics[scale=0.60]{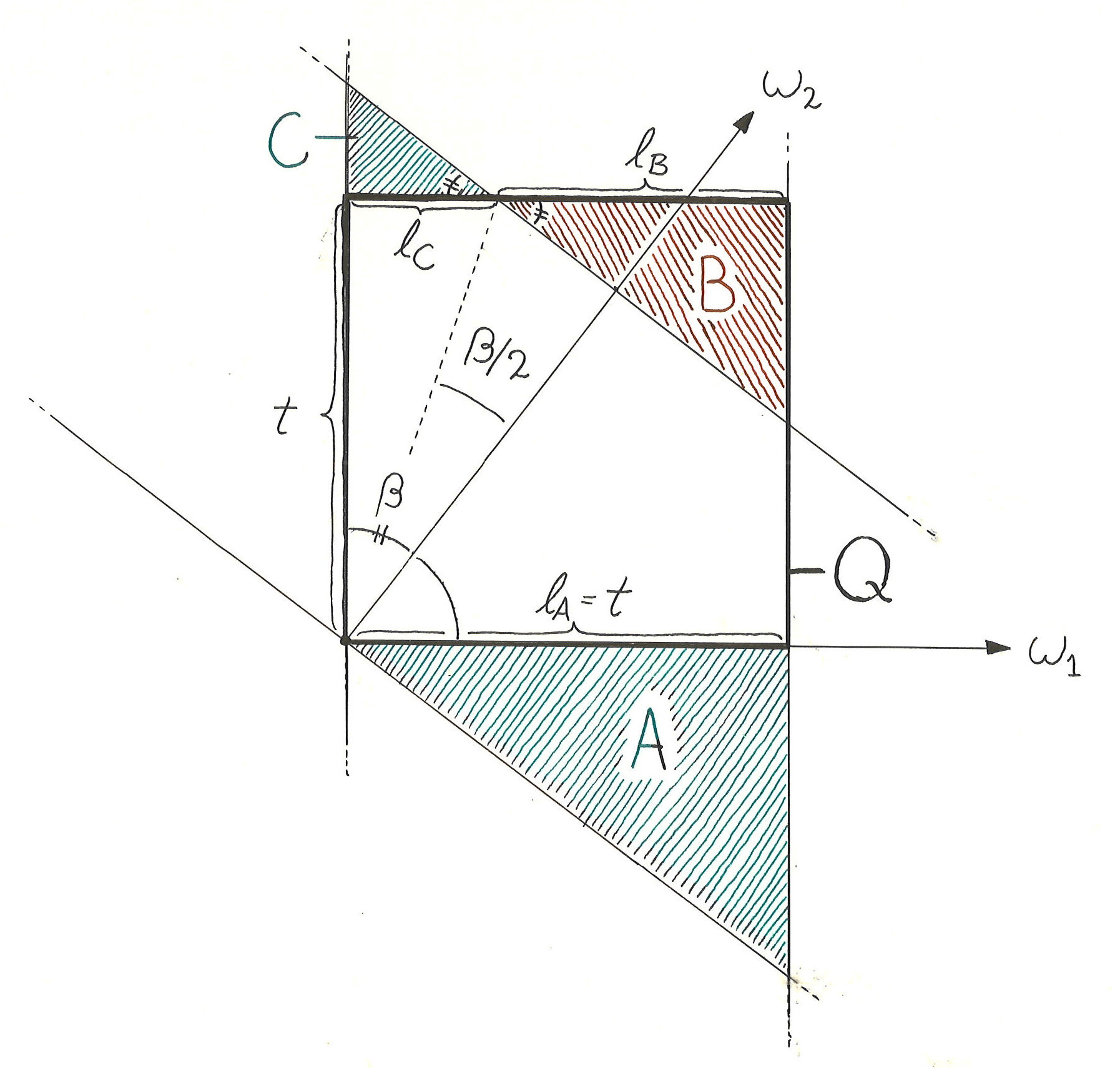}
\end{center}
\caption{We need to bound the 2-dimensional Gaussian integral over the the intersection of two slabs
defined by normal vectors $\bm w_1, \bm w_2$ in \cref{eq:apple}; the intersection is drawn in this figure as the
area $(Q \cup A \cup C) \setminus B$ in the subspace spanned by $\bm w_1, \bm w_2$. The side lengths
$\ell_A=t$, $\ell_B$, $\ell_C=t-\ell_B$ are the sides of the triangles abutting the central square
$Q$, whose Gaussian mass is $\gamma_2(Q)=S^2$.}
\label{fig:trig}
\end{figure}

\subsection{Approximation by Elementary Trigonometry}

Fix any value of $\rho > 0$; for convenience, write $A = A(\rho), B = B(\rho), C = C(\rho)$. We want
an upper bound on the expression inside \cref{eq:2d-gaussian-integral},
\begin{equation}
\label{eq:2d-fixed-rho}
\frac{1}{S^2} \left( \gamma_2(A) + \gamma_2(C) - \gamma_2(B) \right) .
\end{equation}
We will approximate the 2-dimensional Gaussian measure by the plane area; let $\lambda(\cdot)$
denote the plane area measure (\ie Lebesgue measure).  The two-dimensional
Gaussian PDF is $\phi(x) \define \frac{1}{2\pi} e^{-\frac{1}{2} \|x\|_2^2}$.  Since $t < 1$, the
ratio between the Gaussian density at the origin and at the corner $(t,t)$ of the square $[0,t]^2$
is
\[
    \frac{\phi((0,0))}{\phi((t,t))} = e^{t^2} \leq 1 + 2t^2 .
\]
By upper- or lower-bounding the densities in each region by either $\phi((0,0))$ or $\phi((t,t))$,
we get an approximation in terms of the plane area:
\begin{equation}
\label{eq:2d-area-approx}
\begin{aligned}
    \frac{1}{S^2}(\gamma_2(A) + \gamma_2(C) - \gamma_2(B))
    &\leq \frac{1}{\phi((t,t))\lambda([0,t]^2)} \Big(\phi((0,0))\lambda(A) +
        \phi((0,0)) \lambda(C) - \phi((t,t)) \lambda(B) \Big) \\
    &\leq \frac{1}{t^2} \Big( \lambda(A) + \lambda(C) - \lambda(B)\Big)  + 4\lambda(A) ,
\end{aligned}
\end{equation}
where we used $\lambda(C) \leq \lambda(A)$ to get the term on the right-hand side.
Let $\beta = \sin^{-1} \rho$. The $A$, $B$, and $C$ triangles are similar right triangles with one 
angle $\beta$ and corresponding side lengths $\ell_A$, $\ell_B$, and $\ell_C$ respectively (with $\ell_A = t$ and
$\ell_B + \ell_C = t$), so their areas are $\tfrac{1}{2}\ell_A^2 \cdot \frac{\sin\beta}{\cos\beta},$ $\tfrac{1}{2}\ell_B^2 \cdot \frac{\sin\beta}{\cos\beta}$ and $\tfrac{1}{2}\ell_C^2 \cdot \frac{\sin\beta}{\cos\beta}$
 respectively. By symmetry\footnote{More concretely, notice that the
 quadrilateral with angle labeled $\beta$ in \cref{fig:trig} is bisected by the dashed line
 into two congruent right triangles with a shared hypotenuse and a length-$t$
 leg.}
 $C$ is equivalent to the triangle that the ray defined by $w_2$ cuts off from $B$; it follows by elementary trigonometry that (consult \cref{fig:trig}),
\begin{align*}
    \frac{1}{t^2}\left(\lambda(A) + \lambda(C) - \lambda(B)\right)
    &= \frac{1}{2t^2} \cdot \left( \ell_A^2 + \ell_C^2 - \ell_B^2 \right)
        \frac{\sin\beta}{\cos\beta} \\
    &= \frac{1}{2} \left( 1 + \frac{\sin^2(\beta/2)}{\cos^2(\beta/2)}
        - \left(1-\frac{\sin(\beta/2)}{\cos(\beta/2)}\right)^2\right) \frac{\sin\beta}{\cos\beta} \\
    \tag{half-angle identities \&c.}
    &= \frac{1}{\cos\beta} - 1  = \frac{1}{\sqrt{1-\rho^2}} - 1 \\
    &= \frac{1 - \sqrt{1-\rho^2}}{\sqrt{1-\rho^2}}
    \leq \frac{\rho^2}{\sqrt{1-\rho^2}} .
\end{align*}
Substituting back into \cref{eq:2d-area-approx} (with $\lambda(A) = \tfrac{1}{2}t^2 \cdot
\frac{\sin\beta}{\cos\beta} = \tfrac{1}{2} t^2 \tfrac{\rho}{\sqrt{1-\rho^2}}$),
\begin{equation}
\label{eq:2d-final}
    \frac{1}{S^2}\left(\gamma_2(A) + \gamma_2(C) - \gamma_2(B)\right)
        \leq \frac{\rho^2}{\sqrt{1-\rho^2}} + 2 t^2 \frac{\sin\beta}{\cos\beta}
        = \frac{\rho^2}{\sqrt{1-\rho^2}} + 2 t^2 \frac{\rho}{\sqrt{1-\rho^2}} .
\end{equation}

\subsection{Completing the Proof}
Substituting \cref{eq:2d-final} into \cref{eq:2d-gaussian-integral}, we get
\[
    1 + \chi^2(p,q)
    \leq 2 \int_0^1 \left( 1 + \frac{\rho^2}{\sqrt{1-\rho^2}}
        + 2 t^2\frac{\rho}{\sqrt{1-\rho^2}} \right)^m d\sigma_n(\rho)  .
\]
It is now straightforward to bound this quantity using concentration of measure for unit vectors.
We split the integral into the case $\rho < \tfrac{1}{\sqrt 2}$ and $\tfrac{1}{\sqrt 2} < \rho < 1$.
In the first case, since $5t^4m < 1$, we can use $(1+5t^4)^m \leq e^{5t^4m} \leq 1 + 10 t^4m$ to
bound the integral by
\begin{align*}
\int_0^{1/\sqrt{2}} (1 + \sqrt 2 \rho^2 + 2 \sqrt{2} t^2\rho)^m d\sigma(\rho)
&\leq \int_0^{t^2} (1 + 5t^4)^m  d\sigma(\rho)
        + \int_{t^2}^{1/\sqrt{2}} (1 + 5\rho^2)^m d\sigma(\rho) \\
&\leq (1 + 10 t^4m) \int_0^{t^2} d\sigma(\rho)
        + \int_{t^2}^{1/\sqrt{2}} (1 + 5\rho^2)^m d\sigma(\rho) \\
&\leq (1 + 10 t^4m) \int_0^1 (1+5\rho^2)^m d\sigma(\rho) \\
&\leq (1 + 10 t^4m) \cdot \frac{1}{2} \Exu{\bm \rho}{ (1+5\bm \rho^2)^m }  .
\end{align*}
The following is a standard concentration bound for $\bm \rho$ (see \eg Lemma~2.2 of \cite{ball1997elementary}):
\[
    \forall \gamma > 0\colon\qquad \Pr{ |\bm \rho| > \gamma } \leq 2e^{-\gamma^2 n / 2} .
\]
Therefore
\begin{align*}
\Ex{(1+5\bm \rho^2)^m}
&= \int_0^\infty \Pr{ (1+5\bm \rho^2)^m > x } dx 
\leq \int_0^1 dx + \int_0^\infty \Pr{ e^{5\bm \rho^2 m} > 1 + x } dx \\
&\leq 1 + \int_0^\infty \Pr{ |\bm \rho| > \sqrt{\frac{1}{5m}\ln(1+x)} } dx 
\leq 1 + 2\int_0^\infty (1+x)^{-\frac{n}{10 m}} dx
= 1 + \frac{20 m}{n - 10 m} .
\end{align*}
Using $t = n^{-1/4}$, this gives a final bound for the case $0 < \rho < 1/\sqrt 2$ of
\begin{align*}
    2\int_0^{1/\sqrt 2} \left(1 + \frac{\rho^2}{\sqrt{1-\rho^2}} + 2 t^2
\frac{\rho}{\sqrt{1-\rho^2}}\right)^m d\sigma_n(\rho)
    &\leq 2(1 + 10 t^4 m)\cdot \frac{1}{2} \Ex{(1+5 \bm \rho^2)^m} \\
    &\le \left(1 + \frac{10 m}{n}\right)\left(1 + \frac{20 m}{n- 10m}\right) .
\end{align*}

In the case $\tfrac{1}{\sqrt 2} < \rho < 1$, we require the expression for the PDF for the first
coordinate of a random $n$-dimensional unit vector:
\[
    d\sigma_n(\rho) = c_n (1-\rho^2)^{\frac{n-3}{2}} d\rho ,\qquad\text{ where }\qquad
        c_n \define \frac{\Gamma\left(\tfrac{n}{2}\right)}{\sqrt{\pi}
        \Gamma\left(\tfrac{n-1}{2}\right)} \leq \sqrt{\frac{n}{2\pi}} .
\]
Then, for $m < n/3$ and sufficiently large $n$, we can bound the integral by
\begin{align*}
    \int_{1/\sqrt{2}}^1 \left(1 + (\rho + 2t^2)\frac{\rho}{\sqrt{1-\rho^2}} \right)^m d\sigma(\rho)
    &= c_n \int_{1/\sqrt{2}}^1 \left(1 + (\rho + 2t^2)\frac{\rho}{\sqrt{1-\rho^2}} \right)^m
        (1-\rho^2)^{\frac{n-3}{2}} d\rho \\
    &= c_n \int_{1/\sqrt{2}}^1 \left(1 - \rho^2 + (\rho+2t^2)\rho\sqrt{1-\rho^2} \right)^m   
        (1-\rho^2)^{\frac{n-2m-3}{2}} d\rho \\
\intertext{as $\rho\sqrt{1-\rho^2} \leq 1/2$ and $\rho^2 \geq 1/2$ we then have}\quad
    &\le c_n \int_{1/\sqrt{2}}^1 \left(\frac{1}{2} + \frac{1}{2} + t^2\right)^m 2^{-\frac{n-2m-3}{2}} d\rho \\
    &\leq 2^{-\eta n}
\end{align*}
where $\eta > 0$ is some constant
Putting both cases together with \cref{eq:tv-to-chi2} and
\cref{eq:2d-gaussian-integral}, we conclude the proof of
\cref{lemma:continuous-slab-indistinguishability} with
\[
    \TV(p,q) \leq \frac{1}{2} \sqrt{\chi^2(p,q)} \leq \frac{1}{2} \cdot
\sqrt{\left(1+\frac{20m}{n-10m}\right)^2 - 1 + 2^{-\eta n}} = O(\sqrt{m/n}) .
\]

\section{Technicalities for Completing \cref{thm:main}}
\label{sec:proof-of-theorem}
\label{section:technicalities}

To complete the lower bound using \cref{lemma:indistinguishability}, three technical aspects remain.
First, the construction above uses randomized labels in the NO case, but
\cref{def:halfspace-testing} requires the input to be labeled by some function $f$. Second, we need
to establish the dependence on $\epsilon$. Finally, we need to establish the additive
$\tfrac{1}{\epsilon}\log(1/\delta)$ term. These are all handled by standard techniques, and we give
the details here.

\paragraph*{From randomized to deterministic labels.}
By \cref{lemma:yes-halfspace}, any halfspace tester must accept with probability at least $3/4$ when
its input is $\TupleS_\yes$. Define random variable $\TupleS'_\no$ as follows:
\begin{enumerate}
    \item Draw a uniformly random function $\bm{f} : \pmset^{n+\ell} \to \pmset$ and, independently,
        a tuple $\TupleS_\no$.
    \item Output the tuple of labeled samples $(\bz_i, \bm{f}(\bz_i))$ for each $(\bz_i,
        \vec{\bb}_i) \in \TupleS_\no$.
\end{enumerate}
Conditional on no point $\bz$ appearing more than once in the tuple, the distributions of
$\TupleS_\no$ and $\TupleS'_\no$ are identical. By a union bound, the probability that any pair of
the $m < n$ samples collide is at most $2^{-\Omega(n)}$, since the first $\Theta(n)$ bits of each
sample are uniform, so $\dist_\TV(\TupleS_\no, \TupleS'_\no) = o_n(1)$. By
\cref{lemma:indistinguishability} and the triangle inequality, $\dist_\TV(\TupleS_\yes,
\TupleS'_\no) = o_n(1)$. Thus no algorithm that draws $cn$ samples distinguishes between these
inputs with constant advantage.

We now claim that, with probability $1 - o_n(1)$ over the choice of $\bm{f}$ above, samples from
$\TupleS'_\no$ are \iid samples from a distribution $\cD$ such that $\dist_\cD(\bm{f}, h) > \alpha$
for every halfspace $h$, for some small absolute constant $\alpha > 0$. This follows from a counting
argument (\eg \cite[Lemma~2.7]{BFH21}) once we observe that, for any fixed choice of ``key'' vector
$a$, the marginal distribution of $\cD_\no^a$ over the point $\bz$ (denoted by $\cD^a_z$) is uniform
over some support of size $2^{\Omega(n)}$. Indeed this is the case, as for each $x \in \pmset^n$ and
$\abra{k} \in \pmset^{\ell}$, we have
\[
    \cD^a_z(x \circ \abra{k})
    = \Pru{\bx \sim \pmset^{n}}{\bx = x}
        \cdot \Pru{\bb' \sim \pmset}{\floor{a^\top x} + \bb' = k} \\
    = 2^{-n-1} \cdot \ind{k - \floor{a^\top x} \in \pmset} \,.
\]
Therefore any halfspace tester with sufficiently small constant proximity parameter must reject with
probability $3/4 - o_n(1)$ on input $\TupleS'_\no$, and hence must have sample complexity
$\Omega(n)$.

\paragraph*{Dependence on $\epsilon$.}
To recover the dependence on $\epsilon$, it suffices to place the hard input $\TupleS_\yes$ or
$\TupleS'_\no$ in a subcube within a hypercube one dimension larger and assign probability mass
$\epsilon$ to it, while the remaining $1-\epsilon$ mass is placed on a ``dummy'' example in the
other subcube. Concretely, let $n^* \define n + \ell + 1$ and define distributions $\TupleS^*_\yes$
and $\TupleS^*_\no$ as follows. To draw a sample from $\TupleS^*_\yes$,
\begin{enumerate}
    \item Draw a tuple $\TupleS_\yes = ((\bz_i, \vec{\bb}_i))_{i \in [m]}$.
    \item \label{item:transform}
        For each $i \in [m]$, sample a Bernoulli random variable $\bs_i \sim \Bern(\epsilon)$ with
        expectation $\epsilon$. If $\bs_i = 1$, set $(\bz^*_i, \vec{\bb}^*_i) \define ((\bz_i, -1),
        \vec{\bb}_i)$. Otherwise, if $\bs_i = 0$, set $(\bz^*_i, \vec{\bb}^*_i) \define (\vec{1} \in
        \pmset^{n^*}, 1)$.
    \item Output tuple $\TupleS^*_\yes \define ((\bz^*_i, \vec{\bb}^*_i))_{i \in [m]}$.
\end{enumerate}
Define $\TupleS^*_\no$ analogously with respect to $\TupleS'_\no$.

By concentration of measure and the data processing inequality, if $m < cn/\epsilon$ for
sufficiently small $c$, then $\dist_\TV(\TupleS^*_\yes, \TupleS^*_\no) = o_n(1)$ (since with high
probability there are $\le 2cn$ non-dummy samples).

Moreover, samples from $\TupleS^*_\yes$ are \iid samples from a distribution over $\pmset^{n^*}$
labeled according to some halfspace (obtained from the halfspace $h_{\ba}$ from
\cref{lemma:yes-halfspace} by choosing the weight on the last coordinate and the threshold
appropriately). Thus any halfspace tester over $\pmset^{n^*}$ must accept with probability at least
$3/4$ on input $\TupleS^*_\yes$.

Finally, whenever the choice of $\bm{f}$ in the construction of $\TupleS'_\no$ is such that
$\dist_\cD(\bm{f}, h) > \alpha$ for every halfspace $h$, where $\cD = \cD^a_z$ for some vector $a$,
it follows that samples from $\TupleS^*_\no$ are \iid samples from a distribution $\cD^*$ over
$\pmset^{n^*}$ labeled by a function $\bm{f}^*$ satisfying $\dist_{\cD^*}(\bm{f}^*, h^*) > \alpha
\epsilon$ for every halfspace $h^*$ over $\pmset^{n^*}$ (since, from a halfspace $h^*$ closer to
$\bm{f}^*$ over $\cD^*$, we could obtain a halfspace $h$ closer to $\bm{f}$ over $\cD$ by adjusting
the threshold). Thus any halfspace tester over $\pmset^{n^*}$ with proximity parameter $\alpha
\epsilon$ for sufficiently small $\alpha$ must reject with probability $3/4 - o_n(1)$ on input
$\TupleS^*_\no$, and hence must have sample complexity $\Omega(n/\epsilon)$.

\paragraph*{Dependence on $\delta$.}

Thanks to the above proof for the dependence on $\epsilon$, it suffices to show
that $\Omega(\log(1/\delta))$ uniform samples from $\pmset^2$ are required to
distinguish the halfspace $f(x) = x_1$ from the parity function $p(x) = x_1x_2$
with probability at least $1-\delta$. Let $A$ be an algorithm which succeeds in
this task using $m$ samples.
Write $\bm{S}_f$ for the tuple of $m$ labeled samples $(\bx, f(\bx))$ where each $\bx$
is sampled independently and uniformly at random from $\pmset^2$, and similarly $\bm{S}_p$ for
the tuple of samples labeled by $p$.
Then
\begin{align*}
    1-2\delta &\leq
    \Pr{ A(\bm S_f) \text{ outputs } \Accept } - \Pr{ A(\bm S_p) \text{ outputs } \Accept}
    \leq \dist_\TV(\bm S_f, \bm S_p) \\
    &\leq 1 - \frac{1}{2}\left(\Pr{\forall (\bm x, f(\bm x)) \in \bm S_f \;\colon\; \bm x = \vec 1 }
        + \Pr{ \forall (\bm x, p(\bm x)) \in \bm S_{p} \;\colon\; \bm x = \vec 1 } \right) \\
    &\leq 1 - 4^{-m} ,
\end{align*}
so we require $4^{-m} \leq 2\delta$, which concludes the proof.

\end{document}